\documentclass[aps, prb, reprint, superscriptaddress, amsmath, amssymb]{revtex4-2}

\usepackage{braket}
\usepackage{graphicx}
\usepackage{dcolumn}
\usepackage{bm}
\usepackage{color}
\usepackage{soul} 
\usepackage{mathtools}
\usepackage[colorlinks, citecolor=blue,urlcolor=blue,linkcolor=blue]{hyperref}
\usepackage{float}
\usepackage{amsthm}
\theoremstyle{plain} 
\newtheorem{theorem}{Theorem}[section]     
\newtheorem{lemma}[theorem]{Lemma}         
\newtheorem{proposition}[theorem]{Proposition}

\theoremstyle{remark}

\begin{document}

\title{Global bifurcations and basin geometry of the nonlinear non-Hermitian skin effect}

\author{Heng Lin}
    \affiliation{State Key Laboratory of Low-Dimensional Quantum Physics and Department of Physics, Tsinghua University, Beijing 100084, China}
\author{Yunyao Qi}
    \affiliation{State Key Laboratory of Low-Dimensional Quantum Physics and Department of Physics, Tsinghua University, Beijing 100084, China}
\author{Gui-Lu Long}%
    \email{gllong@tsinghua.edu.cn}
    \affiliation{State Key Laboratory of Low-Dimensional Quantum Physics and Department of Physics, Tsinghua University, Beijing 100084, China}
    \affiliation{Frontier Science Center for Quantum Information, Beijing 100084, China}
    \affiliation{Beijing Academy of Quantum Information Sciences, Beijing 100193, China}

\begin{abstract}
We study a continuum Hatano--Nelson model with a saturating nonlinear nonreciprocity and analyze its stationary states via the associated phase-space flow.
We uncover a global scenario controlled by a subcritical Hopf bifurcation and a saddle-node of limit cycles, which together generate a finite coexistence window.
In this window, skin modes and extended states are both stable at a fixed energy $E$, separated by a nonlinear basin separatrix in phase space rather than a spectral (mobility-edge) mechanism in a linear system.
An averaged amplitude equation yields closed-form predictions for the limit-cycle branches and the SNLC threshold.
Building on the basin geometry, we introduce a basin-fraction order parameter that exhibits a first-order-like jump at SNLC. 
Intriguing physical phenomena in the coexistence window are also revealed, such as separatrix-induced long-lived spatial transients and hysteresis.
Overall, our findings highlight that, beyond linear spectral concepts, global attractor--basin geometry provides a powerful and complementary lens for understanding stationary states in nonlinear non-Hermitian systems.
\end{abstract}

\maketitle

\section{Introduction}\label{sec:intro}

Non-Hermitian physics provides a unified language for open and nonconservative systems, where gain, loss, dissipation, or asymmetric couplings render the effective Hamiltonian non-Hermitian~\cite{El2018Non}. 
The past decade has witnessed an explosion of interest in this field, leading to the discovery of singular behaviors and novel potential applications without Hermitian counterparts~\cite{PhysRevLett.77.570, PhysRevB.56.8651, PhysRevB.58.8384, PhysRevLett.80.5243, berry2004physics, PhysRevLett.102.065703, PhysRevLett.103.093902, Ruter2010Observation, Feng2011Nonreciprocal,  Regensburger2012Parity-time, Zhen2015Spawning,  Poli2015Selective, longhi2015robust, Chen2017Exceptional, Xiao2017Observation, Bliokh2019Topological, Özdemir2019Parity–time, ashida2020non, Leefmans2022Topological, Arkhipov2023Dynamically, PhysRevResearch.5.L032026}.
A particularly striking phenomenon is the non-Hermitian skin effect (NHSE). In the NHSE, an extensive number of bulk eigenstates become sensitive to boundaries and accumulate at the edges under open boundary conditions, in sharp contrast to the extended Bloch waves of Hermitian systems~\cite{PhysRevB.97.121401, prl_wang, PhysRevLett.121.136802}. 
This boundary accumulation also underlies the breakdown of conventional bulk--boundary correspondence and has motivated the establishment of non-Bloch band theory~\cite{prl_wang, PhysRevLett.123.066404,PhysRevLett.125.126402, PhysRevX.14.021011, okuma2023non, PhysRevLett.124.086801, PhysRevX.8.031079}. 
These developments, together with growing experimental realizations across optical~\cite{doi:10.1126/science.aaz8727, xiao2020non, Liu2022Complex, Leefmans2024Topological}, mechanical~\cite{brandenbourger2019non,palacios2021guided,PhysRevLett.131.207201}, circuit~\cite{helbig2020generalized,zhang2021observation,10.21468/SciPostPhys.16.1.002} and cold atom~\cite{PhysRevLett.124.070402} platforms, have established the NHSE as a central organizing principle for non-Hermitian phenomena.

Alongside these advances, the interplay between non-Hermiticity and nonlinearity has emerged as a rapidly growing frontier~\cite{Bahari2017Nonreciprocal, harari2018topological, bandres2018topological, smirnova2020nonlinear, cai2025versatile, benzaouia2022nonlinear, wang2019dynamics, bai2023nonlinearity, PhysRevB.109.085311, PhysRevResearch.6.013148, PhysRevLett.134.133801, kwong2026, yuce2021nonlinear, PhysRevB.109.094308, PhysRevLett.134.243805, hopf_bif,  hamanaka2026, PhysRevLett.100.030402, RevModPhys.88.035002, PhysRevLett.123.253903, bocharov2023topological, hashemi2025reconfigurable, cheng2025solitons, Pontula2025Non, Salcedo2025Demonstration, Reisenbauer2024Non}. 
This broader direction spans, for example, nonlinear topological photonics and lasers~\cite{Bahari2017Nonreciprocal, harari2018topological, bandres2018topological, smirnova2020nonlinear, cai2025versatile}, 
nonlinear exceptional-point physics~\cite{benzaouia2022nonlinear, wang2019dynamics, bai2023nonlinearity, PhysRevB.109.085311, PhysRevResearch.6.013148, PhysRevLett.134.133801, kwong2026}, 
as well as nonlinear extensions of NHSE~\cite{yuce2021nonlinear, PhysRevB.109.094308, PhysRevLett.134.243805, hopf_bif, hamanaka2026} 
and non-Hermitian solitons~\cite{PhysRevLett.100.030402, RevModPhys.88.035002, PhysRevLett.123.253903, bocharov2023topological, hashemi2025reconfigurable, cheng2025solitons}.

Among these directions, a recent study revealed a Hopf-bifurcation scenario for the nonlinear skin effect by analyzing a nonlinear nonreciprocal Hatano--Nelson model in a spatial-dynamics formulation, where the spatial coordinate is treated as an evolution parameter in a two-dimensional phase-space flow~\cite{hopf_bif}.
Within this framework, they identified a \emph{supercritical} Hopf bifurcation at which a fixed point loses stability and a stable limit cycle emerges, yielding a nonlinearity-driven skin-to-extended transition~\cite{hopf_bif}.
These results naturally raise broader questions beyond this local Hopf picture:
\emph{(i)} what qualitatively changes if the amplitude-dependent nonreciprocity is \emph{non-monotonic} and saturating, so that the effective gain is suppressed at large amplitudes;
\emph{(ii)} what other dynamical bifurcations beyond a supercritical Hopf (e.g., subcritical Hopf or saddle node of limit cycles~\cite{kuznetsov1998elements, guckenheimer2013nonlinear, thompson2002nonlinear}) can occur, and what stationary localization phenomena do they imply;
and \emph{(iii)} can such nonlinear non-Hermitian models host richer global phase-space structures, in particular the coexistence of multiple stable attractors under identical parameters, leading to history- and preparation-dependent stationary states?

In this work, we address the questions above in a continuum nonlinear Hatano--Nelson model~\cite{PhysRevB.56.8651, PhysRevLett.77.570}  with a \emph{saturating} amplitude-dependent nonreciprocity.
We uncover a bifurcation scenario that goes beyond a single supercritical Hopf:
the origin undergoes a \emph{subcritical} Hopf bifurcation at $\gamma=0$, while a \emph{saddle-node of limit cycles (SNLC)} occurs at $\gamma=\gamma_c<0$.
Together these two bifurcations generate a finite coexistence window $\gamma_c<\gamma<0$, where the bistable attractors cause the coexistence of skin and extended states.
Unlike localized--extended coexistence in linear systems, which is typically tied to spectral structures such as mobility edges~\cite{evers2008anderson, Kramer_1993, PhysRevLett.104.070601, PhysRevLett.114.146601, qi2026mobilityedge}, here the two stationary states coexist at a fixed $E$ and are separated by a nonlinear basin separatrix.
Then we derive an averaged amplitude equation, which not only provides an approximate yet simple analytical explanation for the bifurcation diagram, but also yields closed-form predictions for the limit-cycle branches and the SNLC threshold $\gamma_c$.
Building on the separatrix geometry, we further introduce a basin-fraction order parameter, which turns bistability into a phase diagram and reveals a first-order-like jump at SNLC. 
Finally, intriguing physical phenomena related to the phase-space basin geometry in the coexistence window are also revealed, such as separatrix-induced long-lived spatial transients and hysteresis.

This paper is organized as follows.
Sec.~\ref{sec:model} introduces the saturating nonlinear Hatano--Nelson model and formulates the stationary problem as a planar phase-space flow.
Sec.~\ref{sec:bif_diag} presents the global bifurcation diagram and related skin and extended stationary states.
In Sec.~\ref{sec:ave_eqn}, we derive an averaged amplitude equation that provides analytical predictions for the limit-cycle branches and the SNLC threshold.
Sec.~\ref{sec:basin_frac} introduces a basin-fraction order parameter and constructs a phase diagram.
In Sec.~\ref{sec:coexist_phys}, we discuss physical consequences of bistability, including long-lived transients and hysteresis.
Finally, we conclude in Sec.~\ref{sec:concl}.

\section{Model}\label{sec:model}

We study a nonlinear non-Hermitian extension of the continuum Hatano--Nelson~\cite{PhysRevB.56.8651, PhysRevLett.77.570} model, in which the nonreciprocal (non-Hermitian) term is modulated by the local wave amplitude. Concretely, we consider a stationary nonlinear Schr\"odinger equation
\begin{equation}
    \hat H(\psi)\,\psi = E\,\psi ,
    \label{eq:nl_evp}
\end{equation}
with the nonlinear Hamiltonian
\begin{equation}
    \hat H(\psi)=\frac{\hat k^2}{2}+ i\,F(|\psi|^2)\,\hat k,\qquad 
    \hat k \coloneqq -i\partial_x ,
    \label{eq:H_def}
\end{equation}
where the amplitude-dependent non-Hermitian term is chosen in a cubic-quintic form,
\begin{equation}
    F(z)=\gamma + a z - b z^2,\qquad z\equiv |\psi|^2 .
    \label{eq:F_def}
\end{equation}
Here $\gamma\in\mathbb{R}$ quantifies the linear nonreciprocity, while $a>0$ and $b>0$ encode a saturating nonlinear modulation: the $+a|\psi|^2$ term enhances the effective nonreciprocity at moderate amplitudes, whereas the $-b|\psi|^4$ term suppresses it at large amplitudes. For reference, in the linear limit $a=b=0$, the model reduces to the conventional continuum nonreciprocal Hatano--Nelson model: for $\gamma<0$ the wave function decays exponentially and localizes near the left boundary, whereas for $\gamma>0$ it grows exponentially and localizes near the right boundary~\cite{okuma2023non, PhysRevLett.124.086801, PhysRevX.8.031079}.
Throughout the paper, we focus on positive energy $E>0$ and real stationary states $\psi(x)\in\mathbb{R}$ for clarity and to facilitate the dynamical-systems analysis, so that $|\psi|^2=\psi^2$.
This setting already captures the bifurcation structure central to our results. 
Extending to complex wave functions leads to a higher-dimensional flow and is left for future work.

\subsection{Stationary equation as a phase-space flow}

Eq.~\eqref{eq:nl_evp} can be written as an ordinary differential equation in real space:
\begin{equation}
    \partial_x^2\psi -2F(\psi^2)\,\partial_x\psi + 2E\,\psi = 0.
    \label{eq:stationary_ode}
\end{equation}
Following the method used in~\cite{hopf_bif}, a key viewpoint of this work is to treat $x$ as an evolution parameter and characterize stationary states by the long-$x$ behavior of the associated phase-space flow.
Introducing
\begin{equation}
    v(x) \coloneqq \partial_x\psi(x),
    \label{eq:v_def}
\end{equation}
Eq.~\eqref{eq:stationary_ode} becomes the planar dynamical system
\begin{equation}
\begin{aligned}
    \partial_x \psi &= v,\\
    \partial_x v &= 2F(\psi^2)\,v - 2E\,\psi.
\end{aligned}
\label{eq:phase_flow}
\end{equation}
In this formulation, a stationary state corresponds to a trajectory $(\psi(x),v(x))$ in phase space.

To investigate the long-$x$ behavior of the wave function, we consider a semi-infinite boundary condition in this paper, with $\psi(0) = 0$ and parametrize the remaining freedom by the boundary slope
\begin{equation}
    s \coloneqq \partial_x\psi(0) = v(0) \in\mathbb{R}.
    \label{eq:bc_slope}
\end{equation}
Thus, for each choice of parameters $(\gamma,a,b,E)$, the family of trajectories launched from $(\psi,v)=(0,s)$ provides a natural ensemble of stationary states.
Operationally, we use a standard shooting method, integrating Eq.~\eqref{eq:phase_flow} forward in $x$.
Details of the numerical methods used in this paper are provided in Appendix~\ref{app:numerics}.

\subsection{Attractor-localization correspondence}
The phase-space flow picture immediately establishes a correspondence between the attractor types in phase space and the localization properties of the wave function $\psi(x)$.

First, the origin $(\psi,v)=(0,0)$ is the unique fixed point of Eq.~\eqref{eq:phase_flow}.
Linearizing Eq.~\eqref{eq:phase_flow} around the origin~\cite{kuznetsov1998elements, guckenheimer2013nonlinear} gives
\begin{equation}
    \partial_x 
    \begin{pmatrix}
    \psi \\ v
    \end{pmatrix}
    =
    \underbrace{
    \begin{pmatrix}
    0 & 1 \\
    -2E & 2\gamma
    \end{pmatrix}}_{\coloneqq J(0)}
    \begin{pmatrix}
    \psi \\ v
    \end{pmatrix},
    \label{eq:lin_system}
\end{equation}
whose characteristic equation $\det\left(\lambda \mathbf{I}-J(0)\right)=0$ yields the eigenvalues
\begin{equation}
    \lambda = \gamma \pm \sqrt{\gamma^2-2E}.
    \label{eq:lin_origin}
\end{equation}
For $E>0$, $\mathrm{Re}(\lambda)$ has the same sign as $\gamma$, so the origin is locally attracting (stable) for $\gamma<0$ and repelling (unstable) for $\gamma>0$.
A trajectory attracted to the origin corresponds to a state whose amplitude decays as $x$ increases, i.e., a \emph{skin-localized} state in real space.

Beyond the fixed point, the flow may admit limit cycles, i.e., closed periodic orbits in phase space, which can be attracting or repelling~\cite{kuznetsov1998elements, guckenheimer2013nonlinear}.
A stable limit cycle $\Gamma$ corresponds to a bounded oscillatory state with a nonvanishing asymptotic amplitude as $x\to\infty$, i.e., an \emph{extended} state in real space.

When multiple attractors (e.g., the origin and a stable limit cycle) are present, their basins of attraction are separated by a separatrix, which in our setting is typically organized by an unstable limit cycle~\cite{kuznetsov1998elements, guckenheimer2013nonlinear}.
As a result, initial slopes $s$ on different sides of this separatrix evolve toward different long-$x$ behaviors (skin vs.\ extended) even at identical system parameters.
This attractor-based viewpoint will serve as the backbone of the bifurcation diagram in Sec.~\ref{sec:bif_diag} and the basin-measure phase diagram in Sec.~\ref{sec:basin_frac}.

\section{Global bifurcation diagram}\label{sec:bif_diag}

The phase-space flow formulation turns the stationary state problem into a planar dynamical system.
The long-$x$ stationary behaviors are organized by the invariant sets (fixed points and limit cycles) and their stability.
Figure~\ref{fig:bif_diag} summarizes this organization of our system in a global bifurcation diagram~\cite{kuznetsov1998elements} as $\gamma$ is varied, using the limit-cycle amplitude
$A$ defined as the turning-point value $|\psi|$ at $v=\partial_x\psi=0$ on the cycle. The branch $A=0$ corresponds to the origin.

Two bifurcations control the entire scenario.
First, the origin is stable for $\gamma<0$ and loses stability at $\gamma=0$ (Sec.~\ref{sec:model}).
In our model, this occurs via a \emph{subcritical Hopf} bifurcation~\cite{kuznetsov1998elements, guckenheimer2013nonlinear}: 
an \emph{unstable} small-amplitude limit cycle exists on the $\gamma<0$ side and shrinks into the origin as $\gamma\to 0^{-}$.
Second, at $\gamma=\gamma_c<0$ a \emph{SNLC bifurcation} creates a pair of finite-amplitude periodic orbits: an \emph{outer stable} cycle and an \emph{inner unstable} cycle~\cite{kuznetsov1998elements}.
These two bifurcations partition the $\gamma$ axis into three dynamical regimes, as shown in Fig.~\ref{fig:bif_diag}.
For $\gamma<\gamma_c$, the origin is the only attractor (skin-only).
For $\gamma>0$, the origin is unstable and the stable limit cycle is the unique attractor (extended-only).
Most importantly, in the intermediate window $\gamma_c<\gamma<0$ (shaded in Fig.~\ref{fig:bif_diag}), the stable origin and the stable outer limit cycle \emph{coexist}, forming a bistable regime.
A proof sketch of this bifurcation structure is provided in Sec.~\ref{sec:bif_diag_C}.

\begin{figure}[t]
  \includegraphics[width=0.95\linewidth]{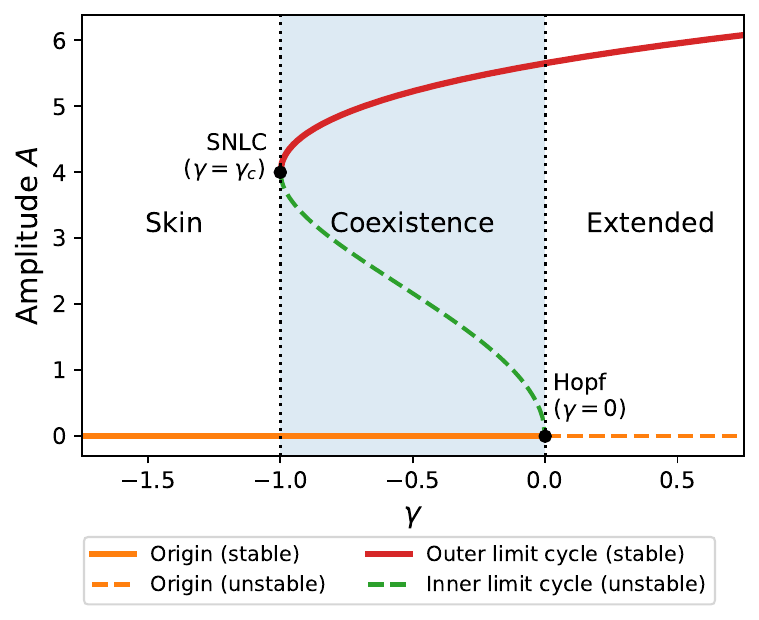}
  \caption{Global bifurcation diagram of the phase-space flow Eq.~\eqref{eq:phase_flow}, plotted in terms of the limit-cycle amplitude $A$ versus $\gamma$. Solid (dashed) curves denote stable (unstable) invariant sets. The shaded interval $\gamma_c<\gamma<0$ indicates bistability, where the fixed point (skin) and the limit cycle (extended) coexist. Parameters are $a=1/2$, $b=1/32$, and $E=8$.
  \label{fig:bif_diag} 
  }
\end{figure}

The bifurcation diagram above identifies which invariant sets exist and are stable as $\gamma$ is varied, establishing the dynamical backbone of this system.
To connect this backbone to the non-Hermitian skin physics, we next translate these invariant sets into stationary wave functions, highlighting their localization properties and the intuitive mechanisms underlying the three regimes.

Figure~\ref{fig:phase} shows representative phase-space portraits together with wave-function profiles.
In the \textbf{skin regime} ($\gamma<\gamma_c$), trajectories spiral toward the origin [Fig.~\ref{fig:phase}(a)], and the wave function decays with $x$ [Fig.~\ref{fig:phase}(b)].
This behavior is consistent with the linear Hatano--Nelson model: the negative nonreciprocal term causes decay away from the left boundary, leading to left-boundary skin localization~\cite{okuma2023non, PhysRevLett.124.086801, PhysRevX.8.031079}.
In our nonlinear model, sufficiently negative $\gamma$ keeps the nonreciprocity $F(\psi^2)=\gamma+a\psi^2-b\psi^4$ negative for any $\psi$, leading to this decay and making the origin the sole attractor.

In the \textbf{extended regime} ($\gamma>0$), the long-$x$ behavior is governed by a stable limit cycle [Fig.~\ref{fig:phase}(h)], and the wave function approaches a bounded oscillatory state with a nonzero amplitude [Fig.~\ref{fig:phase}(i)].
A useful intuition is ``growth with saturation'': at small amplitudes the effective nonreciprocity $F(\psi^2)=\gamma+a\psi^2-b\psi^4$ is positive and promotes amplification, while at larger amplitudes the $-b\psi^4$ term suppresses $F$ and prevents runaway growth, pinning the motion to a finite-amplitude cycle.

The most distinctive behavior occurs in the \textbf{coexistence regime} ($\gamma_c<\gamma<0$), where both the origin and an outer limit cycle are attracting, with an unstable inner limit cycle serving as a basin separatrix [Fig.~\ref{fig:phase}(c)].
Consequently, two trajectories launched from different boundary slopes can evolve to completely different long-$x$ stationary states.
A trajectory inside the inner cycle will be attracted to the origin, forming a skin mode [Fig.~\ref{fig:phase}(d)], while a trajectory outside the inner cycle is attracted to the stable outer cycle, corresponding to an extended state [Fig.~\ref{fig:phase}(g)].
In linear systems, such localized--extended coexistence often depends on spectral structures (such as mobility edges)~\cite{evers2008anderson, Kramer_1993, PhysRevLett.104.070601, PhysRevLett.114.146601, qi2026mobilityedge}; however, in our system, the skin and extended states exist simultaneously at a fixed $E$, and the separatrix is provided by the unstable inner cycle.

\begin{figure*}[t]
  \includegraphics[width=1.0\linewidth]{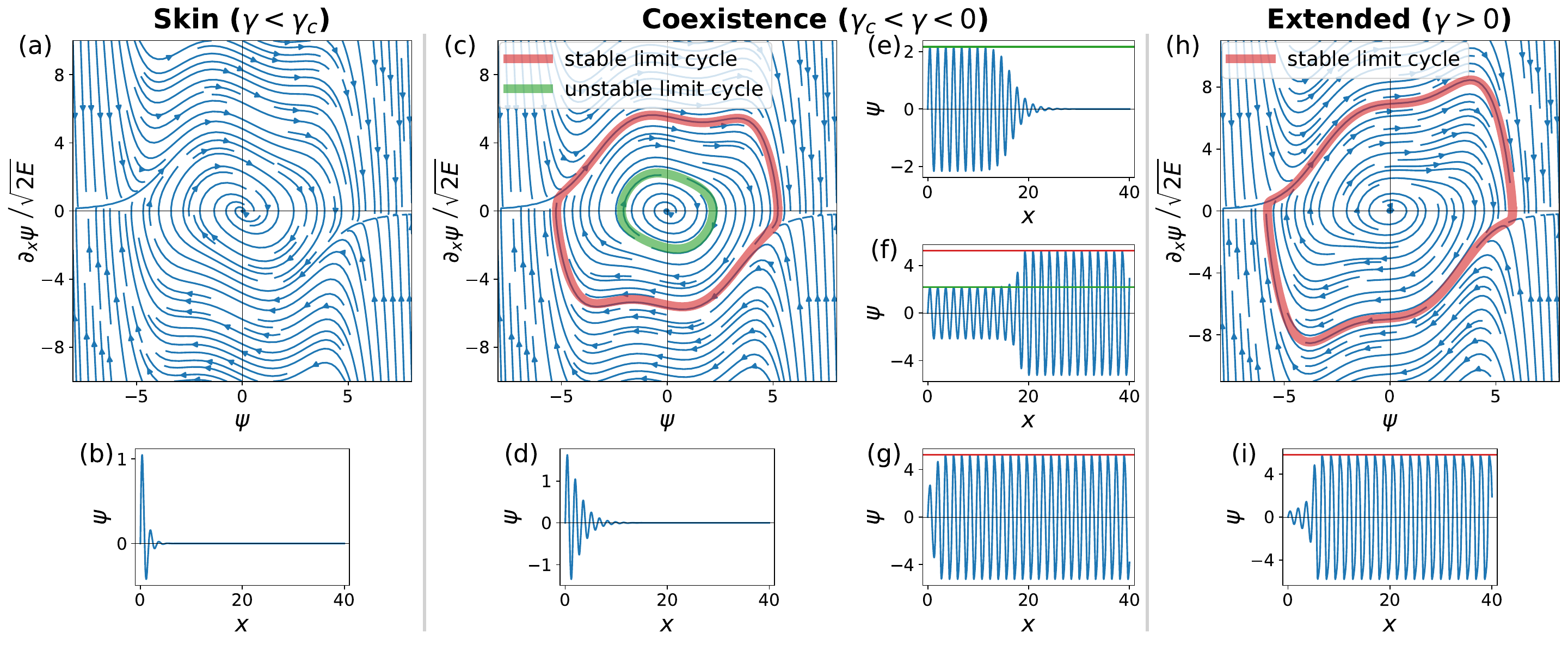}
  \caption{\textbf{Representative phase-space flow portraits and wave function profiles across the three regimes in Fig.~\ref{fig:bif_diag}.}
  Phase-space flows are shown in $(\psi,\partial_x\psi)$ with the vertical axis normalized by the natural frequency $\sqrt{2E}$.
  \textbf{Skin regime} $\;(\gamma<\gamma_c)$: (a) flow at $\gamma=-1.2$; (b) a typical skin mode with $\partial_x \psi(0)=6$.
  \textbf{Coexistence regime} $\;(\gamma_c<\gamma<0)$: (c) flow at $\gamma=-0.5$, showing a stable outer limit cycle (red) and an unstable inner limit cycle (green); (d-g) representative profiles with $\partial_x \psi(0)=7$ (skin), $\partial_x \psi(0)=8.69755$ (near-separatrix trajectory returning to the skin attractor), $\partial_x \psi(0)=8.69756$ (near-separatrix trajectory captured by the extended attractor), and $\partial_x \psi(0)=10$ (extended), respectively.
  \textbf{Extended regime} $\;(\gamma>0)$: (h) flow at $\gamma=0.2$, featuring a stable limit cycle (red); (i) a representative extended state with $\partial_x\psi(0)=2$.
  For (e-g) and (i), guide lines (green/red) indicate the inner/outer cycle amplitudes predicted in Sec.~\ref{sec:ave_eqn}.
  For all subfigures, parameters are $a=1/2$, $b=1/32$, $E=8$.
  \label{fig:phase} 
  }
\end{figure*}

\subsection{Proof sketch: why the diagram has this structure}\label{sec:bif_diag_C}

Here we give a concise proof sketch supporting the bifurcation diagram in Fig.~\ref{fig:bif_diag}. Full technical details are provided in Appendix~\ref{app:proofs}.

\emph{(i) Trapping and boundedness.}
The saturating nonlinearity $F(\psi^2)=\gamma+a\psi^2-b\psi^4$ implies that for sufficiently large $|\psi|$ one has $F(\psi^2)<0$ (the $-b\psi^4$ term dominates).
This provides strong effective damping in the $v$-equation of Eq.~\eqref{eq:phase_flow}, allowing one to construct a compact trapping region in the $(\psi,v)$ plane: all trajectories enter it after finite $x$ and remain bounded thereafter.

\emph{(ii) Strong negative $\gamma$.}
When $\gamma$ is sufficiently negative, the nonreciprocal term $F(\psi^2)$ stays negative for any $\psi$. In this case, the flow is everywhere effectively damped, and one can show the periodic orbits are absent and the origin is globally attracting.

\emph{(iii) Positive $\gamma$.} 
In contrast, for $\gamma>0$ the origin is unstable (Sec.~\ref{sec:model}), while boundedness excludes the possibility of divergence; in a planar flow this forces the long-$x$ behavior toward a periodic attractor. To be precise, the Li\'enard theorem~\cite{lienard1928, guckenheimer2013nonlinear} provides \emph{existence and uniqueness} of a stable limit cycle for $\gamma>0$.

\emph{(iv) Subcritical Hopf at $\gamma=0$.}
At $\gamma=0$, $\mathrm{tr}\left( J(0) \right)=2\gamma$ changes sign while $\det \left( J(0) \right)=2E>0$, so the equilibrium undergoes a Hopf bifurcation with frequency $\sqrt{2E}$.
For $a>0$, the first Lyapunov coefficient indicates that the Hopf bifurcation is \emph{subcritical}~\cite{kuznetsov1998elements}. This implies that an \emph{unstable} small-amplitude limit cycle exists for $\gamma\to 0^{-}$ and collides with the origin at $\gamma=0$.

\emph{(v) Why a SNLC at $\gamma_c<0$ and bistability for $\gamma_c<\gamma<0$.}
From (iii) and (iv), a stable cycle exists for $\gamma>0$ and an unstable cycle exists just below $\gamma=0$, whereas from (ii) no cycle can exist for sufficiently negative $\gamma$.
Thus, upon continuation from $0$ to more negative $\gamma$, the limit cycle branches must terminate at some $\gamma=\gamma_c<0$.
Generically, a one-parameter family of planar limit cycles terminates through a SNLC bifurcation: a pair of cycles with opposite stability collides and annihilates each other.
This yields the SNLC at $\gamma=\gamma_c$ in Fig.~\ref{fig:bif_diag} with an inner unstable cycle and an outer stable cycle.
Since the origin remains stable for $\gamma<0$, the interval $\gamma_c<\gamma<0$ is necessarily bistable, with basins separated by the unstable cycle.

\section{Averaged amplitude equation}\label{sec:ave_eqn}

In Sec.~\ref{sec:bif_diag}, we established the global bifurcation skeleton of the phase-space dynamics. 
In this section, we develop a complementary analytical description based on averaging method~\cite{2007Averaging, Kevorkian1996Multiple}, which yields a simple \emph{one-dimensional} amplitude equation capturing the emergence, stability, and amplitudes of the limit cycles. 
Moreover, this approach predicts the inner/outer cycle amplitudes and the SNLC threshold $\gamma_c$, in quantitative agreement with numerical continuation (Fig.~\ref{fig:avg_eqn}).

\subsection{Derivation of the averaged amplitude equation}

In the Hermitian limit, our model reduces to a harmonic oscillator with frequency $\omega=\sqrt{2E}$. It is therefore natural to scale the variable $v=\partial_x \psi$ in phase space [Eq.~\eqref{eq:phase_flow}], 
\begin{equation}
w \coloneqq v / \omega.
\end{equation}
We then parameterize the trajectory $(\psi, w)$ by polar coordinates $(r, \theta)$ in phase space,
\begin{equation}
\psi=r\cos\theta,\qquad w=-r\sin\theta,
\label{eq:polar_def}
\end{equation}
so that $r\ge 0$ measures the oscillation amplitude while $\theta$ is the oscillation phase.
A substitution of Eq.~\eqref{eq:polar_def} into Eq.~\eqref{eq:phase_flow} yields an \emph{exact} amplitude equation of the form
\begin{equation}
\partial_x r = 2r\Bigl(\gamma+a r^2\cos^2\theta-b r^4\cos^4\theta\Bigr)\sin^2\theta,
\label{eq:r_exact}
\end{equation}
and a phase equation
\begin{equation}
\partial_x \theta =\omega + \mathcal{O}\!\left(|\gamma|+a r^2+b r^4\right),
\label{eq:theta_exact}
\end{equation}
where the $\mathcal{O}(\cdot)$ correction comes from the nonlinear/non-Hermitian drift term in Eq.~\eqref{eq:phase_flow}.
Eq.~\eqref{eq:r_exact} shows that the amplitude change is modulated by trigonometric factors, i.e., it oscillates within each cycle.

The key approximation is the standard ``slowly varying amplitude'' condition~\cite{2007Averaging, Kevorkian1996Multiple}:
\begin{equation}
|\gamma|+a r^2+b r^4 \ll \omega=\sqrt{2E},
\label{eq:fast_slow_cond}
\end{equation}
equivalently, the non-Hermitian modulation
$F(\psi^2)=\gamma+a\psi^2-b\psi^4$
is small compared with the natural oscillation frequency scale.
Under Eq.~\eqref{eq:fast_slow_cond}, the phase $\theta$ advances almost uniformly over one period, while $r$ changes only slightly. 
Thus, to leading order one may average Eq.~\eqref{eq:r_exact} over $\theta\in[0,2\pi)$, replacing the rapidly oscillating factors by their mean values, which leads to the \emph{averaged amplitude equation}
\begin{equation}
\partial_x r = h(r),\qquad 
h(r)\equiv r\left(\gamma+\frac{a}{4}r^2-\frac{b}{8}r^4\right).
\label{eq:amplitude_avg}
\end{equation}
It is a one-dimensional equation of $r\ge 0$ and will be the main tool of this section.
A detailed derivation of Eq.~\eqref{eq:amplitude_avg} is provided in Appendix~\ref{app:avg_eqn}.

\subsection{Invariant sets, stability, and bifurcations from the averaged amplitude equation}
\label{sec:ave_interpret}

Eq.~\eqref{eq:amplitude_avg} provides a compact reduced description of the oscillation amplitude of the wave function [Eq.~\eqref{eq:polar_def}].
Crucially, since Eq.~\eqref{eq:amplitude_avg} is one-dimensional, the existence, stability, and disappearance of attractors reduce to a simple algebraic problem, which not only explains the bifurcation diagram in Sec.~\ref{sec:bif_diag}, but also yields theoretical predictions for the SNLC threshold $\gamma_c$ and the limit-cycle amplitudes.

In particular, equilibria $r=A$ of Eq.~\eqref{eq:amplitude_avg} (i.e., $h(A)=0$) correspond to invariant sets of the phase-space flow.
The stability of an invariant set is also determined by the sign of the derivative $h'(r)$: if $h'(A) < 0$, the invariant set is stable, and vice versa~\cite{guckenheimer2013nonlinear}.

We first consider the trivial equilibrium $A_0=0$, which corresponds to the fixed point at the origin $(\psi,v)=(0,0)$. 
From Eq.~\eqref{eq:amplitude_avg}, $h'(0)=\gamma$, so the origin is stable for $\gamma<0$ and unstable for $\gamma>0$, consistent with the linearized analysis in Sec.~\ref{sec:model}.
Moreover, near $\gamma=0$ the expansion
\begin{equation}
h(r)=\gamma r+\frac{a}{4}r^{3}+\mathcal{O}(r^{5}).
\end{equation}
shows that for $a>0$ the cubic term is positive, implying that the small-amplitude cycle on the $\gamma<0$ side is repelling; this is the normal-form signature of a \emph{subcritical} Hopf bifurcation at $\gamma=0$~\cite{kuznetsov1998elements}.

We next turn to nontrivial equilibria $A>0$, which determine the inner/outer limit-cycle branches and the SNLC threshold.
They satisfy
\begin{equation}
\gamma+\frac{a}{4}A^2-\frac{b}{8}A^4=0,
\label{eq:nontrivial_eq}
\end{equation}
whose real positive solutions are
\begin{equation}\label{eq:amp_predic}
\begin{aligned}
&A_{\rm in}(\gamma)=\sqrt{\frac{a}{b}\left(1-\sqrt{1+\frac{8b\gamma}{a^2}}\right)},\\
&A_{\rm out}(\gamma)=\sqrt{\frac{a}{b}\left(1+\sqrt{1+\frac{8b\gamma}{a^2}}\right)},
\end{aligned}
\end{equation}
corresponding to the amplitudes of inner and outer limit cycles.
For $A>0$ with $h(A)=0$, differentiating Eq.~\eqref{eq:amplitude_avg} yields
\begin{equation}
h'(A)=\frac{b}{2}A^{2}\left(\frac{a}{b}-A^{2}\right).
\label{eq:hprime_on_branch}
\end{equation}
Therefore, $h'(A_{\rm out})<0$ implies a stable outer cycle, whereas $h'(A_{\rm in})>0$ implies an unstable inner cycle.

These two real positive solutions exist if and only if $1+8b\gamma/a^{2}\ge 0$, i.e., $\gamma \ge \gamma_c^{\rm (th)}$, where $\gamma_c^{\rm (th)}$ gives the first-order prediction for the SNLC threshold $\gamma_c$:
\begin{equation}
\gamma_c^{\rm (th)} = -\frac{a^2}{8b}.
\label{eq:gamma_c_th}
\end{equation}
At $\gamma=\gamma_c^{\rm (th)}$, the two branches merge at a degenerate positive root, which corresponds to a SNLC bifurcation with finite-amplitude nucleation.

\begin{figure}[H]
  \includegraphics[width=0.95\linewidth]{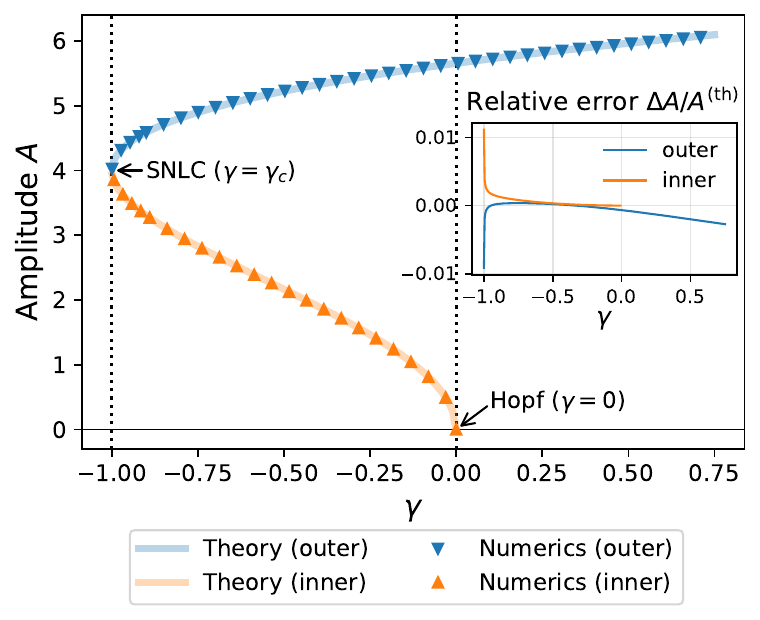}
  \caption{Comparison between the limit-cycle amplitudes predicted by the averaging theory and those obtained from numerical continuation of the phase-space flow dynamics. Inset: relative deviation $(A^{\rm (num)}-A^{\rm (th)})/A^{\rm (th)}$ for the outer and inner branches. Parameters are $a=1/2$, $b=1/32$, $E=8$, and the theoretical prediction of $\gamma_c$ is $\gamma_c^{\rm (th)} = -a^2/8b = -1$.
  \label{fig:avg_eqn} 
  }
\end{figure}

Finally, Fig.~\ref{fig:avg_eqn} compares the theoretical amplitudes $A^{\rm (th)}$ predicted by Eq.~\eqref{eq:amp_predic} with the numerical continuation results $A^{\rm (num)}$.
We observe excellent agreement across the entire parameter range, and the theoretical SNLC threshold $\gamma_c^{\rm (th)}$ also matches the numerical result, confirming that the averaged equation captures the bifurcation structure.

\section{Basin-fraction order parameter and phase diagram}\label{sec:basin_frac}

In a nonlinear system, the notion of a linear spectrum is generally inapplicable, and stationary states are more naturally classified by the long-$x$ attractors of the phase-space flow dynamics Eq.~\eqref{eq:phase_flow}. 
Section~\ref{sec:bif_diag} established the global bifurcation skeleton in terms of invariant sets (the fixed point and limit cycles). 
Here we introduce a complementary \emph{basin-geometry} characterization that turns the bistability window into a quantitative phase diagram~\cite{menck2013basinstability, schultz2017basinstability, vankan2016constrainedbasin}.

Under the semi-infinite boundary condition, stationary-state trajectories are naturally parametrized by the boundary slope
$s=\partial_x \psi(0)\in\mathbb{R}$ [Eq.~\eqref{eq:bc_slope}]. 
To quantify which long-$x$ behavior is ``typical'' at a given $\gamma$, we consider an ensemble of trajectories with initial slopes sampled from a probability measure $\mu$ on $\mathbb{R}$ with density $\rho(s)$, and define the basin-fraction order parameter
\begin{equation}
\begin{aligned}
p_{\mathrm{skin}}(\gamma;\mu)&\coloneqq 
\Pr_{s\sim\mu}\!\Bigl[\text{trajectory is attracted to the origin}\Bigr]\\
&=
\int_{\mathbb{R}}\rho(s)\,\mathbf{1}_{\mathrm{skin}}(s;\gamma)\,ds,
\label{eq:pskin_def}
\end{aligned}
\end{equation}
where $\mathbf{1}_{\mathrm{skin}}(s;\gamma)\in\{0,1\}$ indicates whether the trajectory launched from $(\psi,v)=(0,s)$ flows to the skin attractor.
Throughout, we focus on admissible measures with continuous densities and unbounded support; such choices avoid artificial cutoffs on $s$ and do not alter the qualitative scenario described below.

To compute $p_{\mathrm{skin}}$, the key ingredient is the separatrix structure in the coexistence regime. 
For $\gamma_c<\gamma<0$, the stable origin and the stable outer limit cycle coexist, and their basins are separated by the unstable inner limit cycle (Sec.~\ref{sec:bif_diag}). 
Projecting this separatrix onto the boundary-slope axis yields a threshold $s_{*}(\gamma)>0$:
initial slopes with $|s|<s_{*}(\gamma)$ lie inside the separatrix and flow to the origin (skin), whereas those with $|s|>s_{*}(\gamma)$ lie outside and are attracted to the outer cycle (extended).
Consequently, in the coexistence window one has the general relation
\begin{equation}
p_{\mathrm{skin}}(\gamma;\mu)
=
\int_{-s_{*}(\gamma)}^{s_{*}(\gamma)} \rho(s)\,ds,
\qquad \gamma_c<\gamma<0,
\label{eq:pskin_cdf}
\end{equation}
while outside this window the attractor is unique and no separatrix exists: $p_{\mathrm{skin}}=1$ for $\gamma<\gamma_c$ (skin-only) and $p_{\mathrm{skin}}=0$ for $\gamma>0$ (extended-only).

Eq.~\eqref{eq:pskin_def}--\eqref{eq:pskin_cdf} directly translate the bifurcation skeleton into a basin-geometry-induced phase-transition scenario.
At $\gamma=\gamma_c$ the SNLC bifurcation creates an outer stable cycle at a \emph{finite} amplitude, together with an inner unstable cycle that immediately forms a separatrix~\cite{kuznetsov1998elements}.
As a result, a finite measure of initial slopes is captured by the extended attractor right at onset, producing a \emph{first-order-like jump} in $p_{\mathrm{skin}}$. 
This jump is robust for any admissible probability density $\rho$ with full support, and its size is 
$\Delta p_{\mathrm{skin}} = \int_{-\infty}^{-s_{*}(\gamma_c)} \rho(s)\,ds +  \int_{s_{*}(\gamma_c)}^{\infty} \rho(s)\,ds$.
Within $\gamma_c<\gamma<0$, the separatrix shrinks toward the origin as $\gamma\uparrow 0$, so the skin basin contracts and $p_{\mathrm{skin}}$ decreases continuously.
Finally, as $\gamma\to 0^{-}$ the unstable inner cycle collapses into the origin at the subcritical Hopf point, the origin loses stability, and the skin outcome becomes unreachable as a long-$x$ attractor, yielding $p_{\mathrm{skin}}\to 0$.

To visualize these basin-geometry quantities, Fig.~\ref{fig:basin_frac} shows both the boundary-slope threshold $s_{*}(\gamma)$ and the order parameter $p_{\mathrm{skin}}(\gamma)$ computed numerically.
In practice, we choose a representative probability measure on slopes by the compactification
$s=s_0\tan\theta$ with $\theta\sim\mathrm{Unif}(-\pi/2,\pi/2)$ and $s_0=\sqrt{2E}$,
equivalently the Cauchy density $\rho_{\rm c}(s)=\frac{1}{\pi}\frac{s_0}{s^2+s_0^2}$ of unbounded support.
Other admissible choices of $\mu$ lead to the same qualitative phase diagram, particularly a basin-geometry-induced first-order-like jump at SNLC, while mainly changing the value of $p_{\mathrm{skin}}(\gamma)$ inside the bistable region.

\begin{figure}
  \includegraphics[width=0.95\linewidth]{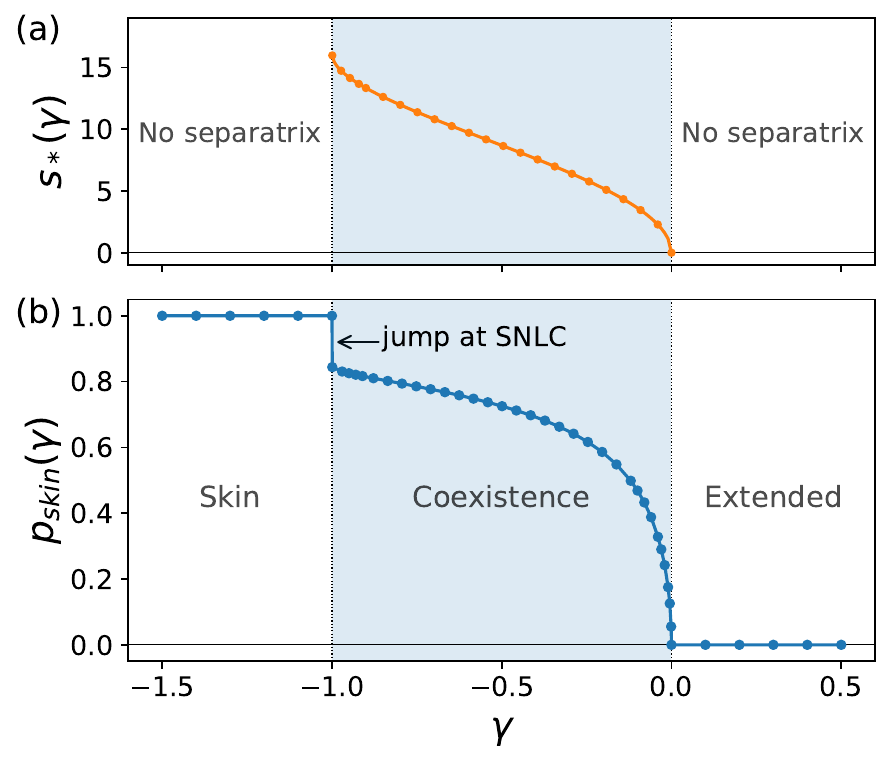}
  \caption{Basin-geometry characterization of the phase diagram.
  (a) Threshold $s_{*}(\gamma)$ separating boundary slope ($s=\partial_x\psi(0)$) that flow to the origin from those attracted to the outer limit cycle. In the coexistence window $\gamma_c<\gamma<0$ (shaded), the separatrix is the unstable inner limit cycle, whose projection onto the boundary-slope axis yields $s_{*}(\gamma)$. Outside this window the attractor is unique and no separatrix exists.
  (b) Basin-fraction order parameter $p_{\mathrm{skin}}(\gamma)$. A first-order-like jump occurs at SNLC.
  Parameters are $a=1/2$, $b=1/32$ and $E=8$. 
  \label{fig:basin_frac} 
  }
\end{figure}

\section{Physical phenomena in the coexistence window}
\label{sec:coexist_phys}

The coexistence window $\gamma_c<\gamma<0$ supports two distinct stable long-$x$ outcomes under the same system parameters: the fixed point at the origin (skin mode) and the stable limit cycle (extended state).
These attractors are separated by the unstable inner limit cycle, which acts as a separatrix in phase space (Sec.~\ref{sec:bif_diag}).
Beyond providing a static classification of stationary states, this basin geometry has direct physical consequences.
In this section, we highlight two generic manifestations: (i) long-lived oscillatory transients in the spatial evolution and sharp initial-condition sensitivity near the separatrix, and (ii) hysteresis (memory) expected under adiabatic parameter ramps due to bistability.

\subsection{Separatrix-induced long-lived spatial transients}
\label{sec:transient_separatrix}

Besides the two asymptotic stationary states (skin and extended), the coexistence window $\gamma_c<\gamma<0$ exhibits a pronounced \emph{long-lived transient} in the spatial evolution when the trajectory is initialized near the basin separatrix.
In this regime, the unstable inner limit cycle forms the separatrix between the origin and the stable outer cycle (Sec.~\ref{sec:bif_diag}), and its projection onto the boundary-slope axis defines the threshold $s_*(\gamma)$ (Sec.~\ref{sec:basin_frac}).
Trajectories launched with $s$ close to $s_*(\gamma)$ can shadow the unstable inner cycle for a long distance in $x$ before eventually approaching one of the stable attractors.

This separatrix-induced transient is directly visible in Fig.~\ref{fig:phase}(e,f).
In both cases, the wave function first develops an extended oscillatory segment whose envelope stays close to the inner-cycle amplitude over a large range of $x$, and then settles into its final long-$x$ behavior---decay to zero or growth to a stable extended state with larger amplitude.
Importantly, this transient-skin state [Fig.~\ref{fig:phase}(e)] is qualitatively distinct from a conventional skin mode, which typically decays exponentially away from the boundary.
Here, the transient is a consequence of the bistability--separatrix geometry in the nonlinear phase space. This mechanism has no counterpart in the linear case.

\subsection{Hysteresis under adiabatic parameter ramps}
\label{sec:hysteresis}

The bistable structure also implies a robust hysteresis (memory) effect when the nonreciprocity parameter $\gamma$ is varied adiabatically, such that the system can relax to the currently stable attractor at each parameter value.

Consider first an adiabatic \emph{decrease} of $\gamma$ starting from the extended regime $\gamma>0$ where the unique attractor is the stable limit cycle.
As $\gamma$ is lowered into the coexistence window, the system can remain on the extended branch because this attractor persists for $\gamma_c<\gamma<0$.
Upon reaching $\gamma=\gamma_c$, the stable and unstable cycles annihilate in a SNLC, and the extended attractor disappears; the system must then switch to the only remaining attractor, the origin, yielding a skin state.

Conversely, under an adiabatic \emph{increase} of $\gamma$ starting from the skin regime $\gamma<\gamma_c$, the trajectory remains attracted to the origin throughout the coexistence window because the origin stays stable for all $\gamma<0$.
Only at $\gamma=0$ does the origin lose stability through the subcritical Hopf bifurcation, beyond which the system switches to the extended state.
Therefore, the switching points differ depending on the sweep direction: $\gamma_c$ on the downward sweep (extended $\to$ skin) and $0$ on the upward sweep (skin $\to$ extended), forming a hysteresis loop.
This provides a natural ``memory'' effect: within $\gamma_c<\gamma<0$, the observed stationary behavior depends on the preparation history, not solely on the instantaneous parameter values.

A full dynamical characterization of this hysteresis---including finite-rate effects (bifurcation delay), noise-induced switching near the separatrix, and concrete implementations in finite systems with physical boundary conditions---requires solving an explicit time-dependent model and lies beyond the scope of the present work.
Here we emphasize that the essential ingredients of hysteresis (bistability and distinct stability-loss thresholds) are already established at the stationary level by the global bifurcation diagram and the basin geometry described in Secs.~\ref{sec:bif_diag}--\ref{sec:basin_frac}.

\section{Conclusions}\label{sec:concl}
In this work we studied a nonlinear non-Hermitian model with a saturating amplitude-dependent nonreciprocity.
Casting the stationary equation into a planar phase-space flow establishes an attractor--localization correspondence.
We uncovered a global bifurcation scenario governed by two thresholds.
The origin loses stability via a \emph{subcritical} Hopf bifurcation at $\gamma=0$, while a SNLC bifurcation occurs at $\gamma=\gamma_c<0$, creating an outer stable and an inner unstable cycle.
Together they generate a finite coexistence window $\gamma_c<\gamma<0$, where skin and extended stationary states are simultaneously stable at a fixed energy $E$ and are separated by a basin boundary organized by the unstable cycle---a coexistence mechanism distinct from linear spectral (mobility-edge) scenarios.

To provide an analytical backbone, we derived an averaged one-dimensional amplitude equation that predicts the inner/outer limit-cycle branches and the SNLC threshold.
Building on the basin geometry, we introduced a basin-fraction order parameter that yields a phase diagram and exhibits a first-order-like jump at SNLC.
We further discussed physical consequences of bistability in the coexistence window, including separatrix-induced long-lived transients and hysteresis under adiabatic parameter sweeps.

These results enrich the nonlinear non-Hermitian skin effect by highlighting the role of global attractor--basin geometry in understanding stationary states in nonlinear non-Hermitian systems, and motivate future studies of time-dependent dynamics, noise-induced switching, and finite-size implementations under physical boundary conditions.

\begin{acknowledgments}
This work is supported by the National Natural Science Foundation of China under Grants No. 11974205, No. 61727801, and No. 62131002, and the Key Research and Development Program of Guangdong Province (Grant No. 2018B030325002).
\end{acknowledgments}

\appendix
\section{Numerical methods}\label{app:numerics}

This appendix summarizes the numerical methods used in
Figs.~\ref{fig:bif_diag}--\ref{fig:basin_frac}.
All computations are based on integrating the phase-space flow Eq.~\eqref{eq:phase_flow}
with the semi-infinite boundary condition $\psi(0)=0$ and $v(0)=s$.

\subsection{Shooting integration}\label{app:num_shooting}

For given $(\gamma,a,b,E)$ and initial slope $s$, we solve Eq.~\eqref{eq:phase_flow} on $x\in[0,L]$
using \texttt{solve\_ivp} from \texttt{SciPy}~\cite{2020SciPy-NMeth} with the \texttt{DOP853} integrator.
The integration length $L$ is chosen sufficiently large so that the long-$x$ behavior is insensitive to further increasing $L$.

When a stable limit cycle is present, we measure its amplitude $A$ as the turning-point value
of $|\psi|$ along the asymptotic periodic orbit, i.e., at points where $v=\partial_x\psi=0$.

\subsection{Continuation via Poincar\'e return map}\label{app:num_continuation}

We formulate limit cycles as fixed points of a one-dimensional Poincar\'e return map~\cite{kuznetsov1998elements}.
Define the section
\begin{equation}
\Sigma \coloneqq \{(\psi,v):\ \psi=0,\ v>0\}.
\end{equation}
Starting from $(\psi,v)=(0,s)\in\Sigma$, we integrate Eq.~\eqref{eq:phase_flow} until the
\emph{next} return to $\Sigma$, and denote the return velocity by $v_1$.
This defines the scalar return map
\begin{equation}
P_{\gamma}(s) \coloneqq v_1 .
\end{equation}
A limit cycle corresponds to a fixed point
\begin{equation}
P_{\gamma}(s_{\rm LC}) = s_{\rm LC},
\end{equation}
equivalently a root of the scalar equation
\begin{equation}
F_{\rm LC}(s,\gamma) \coloneqq P_{\gamma}(s)-s = 0 .
\label{eq:app_FLC}
\end{equation}

To trace the limit-cycle branch in $\gamma$, we perform pseudo-arclength continuation~\cite{kuznetsov1998elements} on the
solution set of Eq.~\eqref{eq:app_FLC}. Given a known solution $(s_k,\gamma_k)$ and a local tangent
direction $\mathbf{t}_k=(t_s,t_\gamma)$, we use a predictor
\begin{equation}
(s^{\rm p},\gamma^{\rm p})
=(s_k,\gamma_k)+\Delta \ell\,\mathbf{t}_k,
\end{equation}
followed by a corrector that solves
\begin{equation}
\begin{cases}
F_{\rm LC}(s,\gamma)=0,\\
(t_s,t_\gamma)\!\cdot\!\bigl((s,\gamma)-(s^{\rm p},\gamma^{\rm p})\bigr)=0,
\end{cases}
\end{equation}
by Newton iteration. Once $s_{\rm LC}$ is obtained, we integrate from $(0,s_{\rm LC})$ and
extract the amplitude $A$ from turning points ($v=0$) as in
Appendix~\ref{app:num_shooting}.

\subsection{Basin fraction and separatrix threshold}\label{app:num_basin}

In the coexistence window $\gamma_c<\gamma<0$, the basin boundary is organized by the unstable
inner limit cycle, whose projection to the boundary-slope axis yields a threshold $s_*(\gamma)>0$
(Sec.~\ref{sec:basin_frac}). Numerically, we locate $s_*(\gamma)$ by bisection in $s>0$ using
a skin/extended classifier applied to the shot trajectory.

For classification, we compute an inverse participation ratio on $[0,L]$,
\begin{equation}
{\rm IPR}(L)
\coloneqq
\frac{\int_0^L |\psi(x)|^4\,dx}
{\left(\int_0^L |\psi(x)|^2\,dx\right)^2},
\end{equation}
and the associated fractal dimension~\cite{thouless1974electrons, evers2008anderson}
\begin{equation}
D_2(L) \coloneqq -\frac{\ln({\rm IPR}(L))}{\ln(L)}.
\end{equation}
Extended states have ${\rm IPR}(L)\sim 1/L$ and thus $D_2\to 1$, while skin-localized states have
${\rm IPR}(L)=\mathcal{O}(1)$ and thus $D_2\to 0$ as $L\to\infty$; we use this criterion (with
convergence checked against increasing $L$) to decide the basin outcome during bisection.

After obtaining $s_*(\gamma)$, we compute the basin fraction from
Eq.~\eqref{eq:pskin_cdf}:
\begin{equation}
p_{\mathrm{skin}}(\gamma;\mu)
=
\int_{-s_*(\gamma)}^{s_*(\gamma)} \rho(s)\,ds.
\end{equation}
For the Cauchy density used in Sec.~\ref{sec:basin_frac},
\begin{equation}
\rho(s)=\frac{1}{\pi}\frac{s_0}{s^2+s_0^2},
\end{equation}
the integral has the closed form
\begin{equation}
p_{\mathrm{skin}}(\gamma;\mu)
=
\frac{2}{\pi}\arctan\!\left(\frac{s_*(\gamma)}{s_0}\right).
\end{equation}

\section{A detailed proof of the bifurcation structure}
\label{app:proofs}

In this appendix, we provide a detailed proof of the bifurcation scenario summarized in
Fig.~\ref{fig:bif_diag}. Throughout we assume
\begin{equation}
a>0,\qquad b>0,\qquad E>0,
\label{eq:app_assumptions}
\end{equation}
and consider the one-parameter family in $\gamma\in\mathbb{R}$.

\subsection{Planar Li\'enard form}
\label{app:proofs_lienard}

We start with the stationary equation Eq.~\eqref{eq:stationary_ode} in real space, which can be written as the scalar second-order ODE
\begin{equation}
\partial_x^2\psi
-2\Bigl(\gamma+a\psi^2-b\psi^4\Bigr)\,\partial_x\psi
+2E\,\psi
=0,
\label{eq:app_scalar}
\end{equation}
which is of polynomial Li\'enard type~\cite{lienard1928, guckenheimer2013nonlinear},
\begin{equation}\label{eq:app_lienard}
\partial_x^2\psi+f(\psi)\partial_x\psi+g(\psi)=0    
\end{equation}
with
\begin{equation}
f(\psi)=-2\Bigl(\gamma+a\psi^2-b\psi^4\Bigr),
\qquad
g(\psi)=2E\,\psi.
\label{eq:app_fg}
\end{equation}

\subsection{Lyapunov function and uniform boundedness}
\label{app:proofs_lyapunov}

Taking the phase-space parameterization $v=\partial_x\psi$ and recalling the planar system Eq.~\eqref{eq:phase_flow},
\begin{equation}
\begin{aligned}
\partial_x\psi &= v,\\
\partial_x v &= 2\Bigl(\gamma+a\psi^2-b\psi^4\Bigr)\,v-2E\,\psi.
\end{aligned}
\label{eq:app_planar}
\end{equation}

\begin{lemma}[Lyapunov function~\cite{kuznetsov1998elements}]\label{lemma:Lyapunov}
Define
\begin{equation}
V(\psi,v) \coloneqq \frac{1}{2}v^2+E\psi^2.
\label{eq:app_V}
\end{equation}
This form is analogous to the energy of a harmonic oscillator. Along any solution of Eq.~\eqref{eq:app_planar},
\begin{equation}
\partial_x V
=
2\Bigl(\gamma+a\psi^2-b\psi^4\Bigr)\,v^2.
\label{eq:app_Vdot}
\end{equation}
\end{lemma}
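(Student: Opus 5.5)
The identity follows from differentiating $V$ along trajectories with the chain rule and substituting the planar system Eq.~\eqref{eq:app_planar}; no earlier result beyond the equations of motion is needed. Since $V(\psi,v)=\tfrac12 v^2+E\psi^2$ is a smooth (quadratic) function of the phase-space coordinates and any solution $(\psi(x),v(x))$ of Eq.~\eqref{eq:app_planar} is $C^1$ in $x$ (the vector field is polynomial, hence locally Lipschitz, so solutions exist and are differentiable on their maximal interval), the composite $x\mapsto V(\psi(x),v(x))$ is differentiable and
\begin{equation*}
\partial_x V = \frac{\partial V}{\partial \psi}\,\partial_x\psi+\frac{\partial V}{\partial v}\,\partial_x v = 2E\psi\,\partial_x\psi + v\,\partial_x v .
\end{equation*}

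Next I insert $\partial_x\psi=v$ and $\partial_x v = 2(\gamma+a\psi^2-b\psi^4)v-2E\psi$. The first term becomes $2E\psi v$. The second becomes $2(\gamma+a\psi^2-b\psi^4)v^2-2E\psi v$. The cross terms $\pm 2E\psi v$ cancel exactly; this is the reason for the weighting $E\psi^2$ (rather than, say, $\tfrac12\psi^2$) in $V$, which makes the conservative restoring force $-2E\psi$ drop out. What remains is precisely Eq.~\eqref{eq:app_Vdot}, $\partial_x V=2F(\psi^2)v^2$.

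There is no genuine obstacle. The only point needing care is making sure the weight in $V$ matches the coefficient $2E$ in $g(\psi)=2E\psi$ so that the cancellation is exact, together with the remark that the identity holds on the whole maximal existence interval of each solution. I would close by noting the interpretation that will be used afterward. The sign of $\partial_x V$ is the sign of $F(\psi^2)$ whenever $v\neq0$. Hence $V$ is nonincreasing wherever $F<0$, which holds for all $\psi$ when $\gamma<-a^2/(4b)$, and also for all $|\psi|$ large regardless of $\gamma$. This is the input for the trapping-region and global-attraction arguments that follow.
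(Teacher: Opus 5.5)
Your proof is correct and matches the paper's argument: apply the chain rule to get $\partial_x V = v\,\partial_x v + 2E\psi\,\partial_x\psi$, substitute the planar system, and note that the $\pm 2E\psi v$ terms cancel. Your remarks on the regularity of solutions and on the sign of $\partial_x V$ are accurate, but the identity itself does not need them.
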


\begin{proof}
Using $\partial_x V=v\,\partial_x v+2E\psi\,\partial_x\psi$ and substituting Eq.~\eqref{eq:app_planar} gives Eq.~\eqref{eq:app_Vdot}.
\end{proof}

\begin{lemma}[Forward boundedness]\label{lemma:boundedness}
For any fixed $\gamma\in\mathbb{R}$, all forward trajectories of Eq.~\eqref{eq:app_planar}
remain bounded in the $(\psi,v)$ plane.
\end{lemma}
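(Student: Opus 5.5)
The plan is to pass from the $(\psi,v)$ variables to the Li\'enard plane, where the saturating $-b\psi^4$ term shows up as an odd, eventually dominant damping primitive, and then to build a modified energy that decreases outside a bounded vertical strip. The function $V$ of Lemma~\ref{lemma:Lyapunov} alone is not enough. Its derivative $\partial_xV=2F(\psi^2)v^2$ is negative only for $|\psi|$ large, and it vanishes on the axis $v=0$. So $V$ can increase inside a strip $|\psi|\le\psi_0$ for arbitrarily large $|v|$. A direct bound $V(x)\le V(0)e^{Cx}$ gives boundedness on finite intervals only, not uniformly in $x$.

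\textbf{Step 1 (Li\'enard variables and modified energy).} Set $\Phi(\psi)\coloneqq\int_0^\psi f(u)\,du=-2\bigl(\gamma\psi+\tfrac a3\psi^3-\tfrac b5\psi^5\bigr)$ and $y\coloneqq v+\Phi(\psi)$. Then Eq.~\eqref{eq:app_planar} becomes
\begin{equation}
\partial_x\psi=y-\Phi(\psi),\qquad \partial_x y=-2E\psi .
\end{equation}
With $W(\psi,y)\coloneqq\tfrac12y^2+E\psi^2$, a one-line computation gives $\partial_xW=-2E\,\psi\Phi(\psi)$. Since $b>0$, we have $\psi\Phi(\psi)=\tfrac{2b}{5}\psi^6-\tfrac{2a}{3}\psi^4-2\gamma\psi^2$. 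Hence there is $\psi_0=\psi_0(\gamma,a,b)$ such that $\psi\Phi(\psi)>0$ for $|\psi|>\psi_0$, and $\psi\Phi(\psi)\ge c\psi^6-C$ for all $\psi$. The set where $W$ can grow is therefore the strip $S=\{|\psi|\le\psi_0\}$. There $\partial_xW\le K\coloneqq 2E\max_{|\psi|\le\psi_0}|\psi\Phi(\psi)|$. Boundedness in $(\psi,y)$ is equivalent to boundedness in $(\psi,v)$, because $\Phi$ is continuous.

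\textbf{Step 2 (crossing the strip costs little).} Take a point in $S$ with $|y|=Y$ large. Inside $S$ we have $|\partial_xy|\le2E\psi_0$ and $|\Phi|\le M$. So for a duration of order $1$, $|\partial_x\psi|\ge Y/2$. The trajectory therefore leaves $S$ after $x$-length at most $4\psi_0/Y$. During the crossing $W$ increases by at most $4K\psi_0/Y$, and $y$ keeps its sign. This uses that $\psi$ moves monotonically across $S$ when $|y|$ is large, since $\mathrm{sign}(\partial_x\psi)=\mathrm{sign}(y)$ there.

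\textbf{Step 3 (the excursion outside pays more; main obstacle).} After exiting at $\psi=\psi_0$ with $y\approx Y>0$ (the other half-plane is symmetric under $(\psi,y)\mapsto(-\psi,-y)$), the orbit travels into $\psi>\psi_0$. It must return, because $\partial_xy=-2E\psi<-2E\psi_0$ drives $y$ down, and throughout this excursion $W$ is nonincreasing. I need a quantitative gain: the decrease $\int 2E\psi\Phi\,dx$ over the excursion exceeds $4K\psi_0/Y$ once $W$ is large. I will first try to exploit that, for large $W$, the orbit reaches $|\psi|$ of order at least $W^{1/6}$. This holds because the orbit either sweeps out near $\psi\sim\sqrt{W/E}$ or is deflected onto the curve $y=\Phi(\psi)$, which for large $\psi$ lies where $\psi\Phi\gtrsim\psi^6$. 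It also spends $x$-length bounded below there, since $|\partial_x y|\le 2E|\psi|$ means $y$ cannot turn around instantly. This gives a decrease of order $\psi_1^6\cdot\Delta x$, which dominates $O(1/Y)$.

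\textbf{Step 4 (conclusion).} Combining Steps 2 and 3, there is $W_*$ such that whenever $W>W_*$, each half-rotation (strip crossing plus outer excursion) strictly lowers $W$. Any orbit with $W\le W_*$ can rise above $W_*$ only during one strip crossing, that is by at most $K\cdot O(1)$. Hence $W(x)\le\max\{W(0),W_*+C'\}$ for all $x\ge0$. Two loose ends must be checked: that orbits cannot stay in $S$ forever with large $|y|$ (excluded by Step 2), and that they cannot stay outside forever (excluded because $W$ then decreases monotonically). The delicate estimate is Step 3, namely controlling the shape of the orbit in the Li\'enard plane. If it proves too messy, I would instead construct an explicit piecewise trapping curve of Levinson--Smith type built from level sets of $W$ and straight segments near the strip, and check transversality of the flow on each piece.
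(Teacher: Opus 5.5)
Your route is sound and differs substantively from the paper's. The paper uses only $V=\tfrac12v^2+E\psi^2$. It picks $R$ with $\gamma+a\psi^2-b\psi^4\le0$ for $|\psi|\ge R$, notes that $\partial_xV\le0$ there, and concludes boundedness ``together with continuity of the flow.'' This is exactly the argument you reject in your opening paragraph, and your objection is correct. Nothing in it controls the gain of $V$ on passages through $|\psi|<R$, so it does not rule out an orbit whose energy ratchets up on every crossing of the strip by more than it loses outside. Your switch to the Li\'enard variable $y=v+\Phi(\psi)$ is what makes this balance tractable. Because $\partial_xW=-2E\psi\Phi(\psi)$ does not involve $y$, a fast crossing at height $|y|=Y$ gains only $O(K\psi_0/Y)$. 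The paper's version buys brevity. Yours, once Step 3 is made precise, is an actual proof: essentially the Levinson--Smith dissipativity argument for Li\'enard systems with $\psi\Phi(\psi)\to+\infty$.

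Step 3 as written does need repair. For large $W$ the ``sweep out'' case never occurs. The orbit turns where $\Phi(\psi)\approx y\approx\sqrt{2W}$, i.e.\ at $\psi\sim W^{1/10}$, not $W^{1/6}$. The dwell-time claim also does not follow from the reason you give. You need none of this; instead, pair each fast crossing with the outgoing leg that follows it:
\begin{enumerate}
\item Fix $\psi_1>\psi_0$ with $\int_{\psi_0}^{\psi_1}2E\psi\Phi(\psi)\,d\psi>8K\psi_0$.
\item For $Y$ large, the orbit runs from $\psi_0$ to $\psi_1$ with $\partial_x\psi>0$, since $y$ drops by only $O(\psi_1^2/Y)$ on the way.
\item On that leg $\Phi\ge0$ and $y$ is decreasing, so $0<\partial_x\psi=y-\Phi(\psi)\le Y_{\mathrm{exit}}\le 2Y$.
\item Changing the integration variable from $x$ to $\psi$, the loss of $W$ on the leg is at least $Y_{\mathrm{exit}}^{-1}\int_{\psi_0}^{\psi_1}2E\psi\Phi\,d\psi>4K\psi_0/Y$. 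This beats the crossing gain.
\end{enumerate}
The rest of the excursion and the return leg only decrease $W$. The odd symmetry $(\psi,y)\mapsto(-\psi,-y)$ handles the other side. So Step 4 goes through with a uniform $W_*$.
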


\begin{proof}
Choose $R>0$ sufficiently large so that
\begin{equation}
\gamma+a\psi^2-b\psi^4\le 0
\qquad \text{whenever }|\psi|\ge R,
\label{eq:app_R_choice}
\end{equation}
which is possible since the $-b\psi^4$ term dominates for large $|\psi|$.
Then by Eq.~\eqref{eq:app_Vdot},
\begin{equation}
\partial_x V \le 0\qquad\text{whenever }|\psi|\ge R,
\label{eq:app_V_noninc}
\end{equation}
so once a trajectory enters $\{|\psi|\ge R\}$ the Lyapunov function $V$ cannot increase there,
preventing the trajectory from escaping to infinity. Together with continuity of the flow, this implies uniform forward
boundedness.
\end{proof}

\subsection{Strongly negative $\gamma$: global stability of the origin and no periodic orbits}
\label{app:proofs_strong_neg}

\begin{proposition}[Only stable origin for sufficiently negative $\gamma$]
\label{prop:stable_origin}
If
\begin{equation}
\gamma<-\frac{a^2}{4b},
\label{eq:app_gamma_strong}
\end{equation}
then the origin $(\psi,v)=(0,0)$ is globally asymptotically stable and
Eq.~\eqref{eq:app_planar} admits no periodic orbits.
\end{proposition}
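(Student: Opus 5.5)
The plan is to use the Lyapunov function $V=\tfrac12 v^2+E\psi^2$ of Lemma~\ref{lemma:Lyapunov} together with LaSalle's invariance principle. The first step is elementary: as a quadratic in $z=\psi^2\ge 0$, the function $F(z)=\gamma+az-bz^2$ attains its maximum $\gamma+a^2/(4b)$ at $z=a/(2b)$. Hence condition~\eqref{eq:app_gamma_strong} gives the uniform bound
\begin{equation}
F(\psi^2)\le -\delta<0,\qquad \delta\coloneqq -\gamma-\frac{a^2}{4b}>0,
\end{equation}
for all $\psi\in\mathbb{R}$. By Eq.~\eqref{eq:app_Vdot} we then get $\partial_x V=2F(\psi^2)v^2\le -2\delta v^2\le 0$ on the whole plane, with equality only on the line $\{v=0\}$.

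The second step rules out periodic orbits. Suppose a nonconstant periodic orbit of period $T>0$ exists. Integrating $\partial_x V$ over one period gives $0=\oint \partial_x V\,dx\le -2\delta\int_0^T v^2\,dx$, which forces $v\equiv 0$ along the orbit. Then $\psi$ is constant, and the equation $\partial_x v=-2E\psi=0$ forces $\psi=0$. So the orbit is the origin, a contradiction.

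Equivalently, one could apply Bendixson's criterion. The divergence of the vector field is $2F(\psi^2)\le-2\delta<0$, which has constant sign, so no closed orbits exist. I will state whichever argument is cleaner; the divergence version is a one-liner.

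The third step is global asymptotic stability. Since $V$ is positive definite and radially unbounded, and $\partial_x V\le 0$, every sublevel set $\{V\le c\}$ is compact and forward invariant. This also sharpens Lemma~\ref{lemma:boundedness} here. LaSalle's principle then says that every $\omega$-limit set lies in the largest invariant subset of $\{\partial_x V=0\}=\{v=0\}$. On $\{v=0\}$ we have $\partial_x v=-2E\psi$, which is nonzero unless $\psi=0$, so the only invariant subset is the origin. Hence every trajectory converges to $(0,0)$. Lyapunov stability follows from $V$ being nonincreasing with compact sublevel sets shrinking to the origin, and local asymptotic stability is also consistent with Eq.~\eqref{eq:lin_origin} for $\gamma<0$.

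The main obstacle is minor. It is handling the fact that $\partial_x V$ is only negative semidefinite, which is exactly where LaSalle (rather than strict Lyapunov) is needed. The invariance argument on $\{v=0\}$ resolves this cleanly. Note also that the threshold $-a^2/(4b)$ is a sufficient condition and sits below the averaged SNLC estimate $\gamma_c^{\rm (th)}=-a^2/(8b)$, so the statement does not conflict with the coexistence window.
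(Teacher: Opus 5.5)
Your proof is correct and follows essentially the same route as the paper: the bound $\gamma+a\psi^2-b\psi^4\le\gamma+a^2/(4b)<0$, the Lyapunov function $V=\tfrac12 v^2+E\psi^2$ with LaSalle's principle on $\{v=0\}$, and a Lyapunov-decrease argument excluding periodic orbits. Your integral-over-one-period version is a rigorous form of the paper's ``$V$ would have to be periodic'' remark, and the Bendixson alternative with divergence $2F(\psi^2)<0$ is also valid; your observation that $V$ is radially unbounded, so its sublevel sets are compact and forward invariant, supplies the precompactness LaSalle needs, which the paper leaves implicit.
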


\begin{proof}
For $\psi^2\ge 0$, the polynomial $a\psi^2-b\psi^4$ attains its maximum $a^2/(4b)$ at $\psi^2=a/(2b)$.
Hence Eq.~\eqref{eq:app_gamma_strong} implies
\begin{equation}
\gamma+a\psi^2-b\psi^4 \le \gamma+\frac{a^2}{4b}<0
\qquad\forall\,\psi\in\mathbb{R}.
\label{eq:app_F_neg_everywhere}
\end{equation}
Then Eq.~\eqref{eq:app_Vdot} gives $\partial_x V<0$ whenever $v\neq 0$, so the Lyapunov function $V$ strictly decreases along nontrivial trajectories.
By LaSalle's invariance principle~\cite{lasalle1976stability}, every trajectory converges to the largest invariant set
contained in $\{v=0\}$. On $\{v=0\}$ one has $\partial_x v=-2E\psi$, so the only invariant point
is $(0,0)$. Thus the origin is global asymptotically stable.
A periodic orbit would force $V$ to be periodic, contradicting strict decrease of $V$ along
nontrivial trajectories.
\end{proof}

\subsection{Local bifurcation at $\gamma=0$: subcritical Hopf and a small unstable cycle}
\label{app:proofs_hopf}

\begin{proposition}[Hopf bifurcation at $\gamma=0$]
\label{prop:hopf}
The equilibrium $(0,0)$ undergoes a Hopf bifurcation at $\gamma=0$ with linear frequency
$\omega=\sqrt{2E}$.
\end{proposition}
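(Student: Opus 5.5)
The plan is to verify the hypotheses of the standard Hopf bifurcation theorem, in the form given by Kuznetsov, for the one-parameter family Eq.~\eqref{eq:app_planar} at $(\psi,v)=(0,0)$ and $\gamma=0$. The nondegeneracy (first Lyapunov coefficient) condition concerns criticality and is deferred to the subsequent analysis. The origin is an equilibrium for every $\gamma$ and is the only one, since $v=0$ together with $2E\psi=0$ forces $\psi=0$. The Jacobian is the matrix $J(0)$ of Eq.~\eqref{eq:lin_system}, because the terms $a\psi^2 v$ and $b\psi^4 v$ are of order at least three and do not contribute. We have
\begin{equation}
\mathrm{tr}\,J(0)=2\gamma,\qquad \det J(0)=2E>0.
\end{equation}

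First, we check the spectral condition at criticality. At $\gamma=0$ the eigenvalues in Eq.~\eqref{eq:lin_origin} are $\lambda=\pm i\sqrt{2E}$. This is a simple pair of purely imaginary eigenvalues with $\omega=\sqrt{2E}\neq 0$, and there are no other eigenvalues because the system is planar.

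Second, we establish transversality. For $|\gamma|<\sqrt{2E}$ the eigenvalues are complex conjugate:
\begin{equation}
\lambda(\gamma)=\mu(\gamma)\pm i\,\omega(\gamma),\qquad \mu(\gamma)=\gamma,\qquad \omega(\gamma)=\sqrt{2E-\gamma^2}.
\end{equation}
Hence $\mu'(0)=1\neq 0$, so the pair crosses the imaginary axis with nonzero speed, moving from the left half-plane to the right half-plane as $\gamma$ increases. This is consistent with the stability change already noted in Sec.~\ref{sec:model}.

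Together, these two conditions show that a Hopf bifurcation occurs at $\gamma=0$ with linear frequency $\omega(0)=\sqrt{2E}$. To make the ensuing Lyapunov-coefficient computation convenient, we would also record the rescaling $w=v/\omega$. In the variables $(\psi,w)$ the linear part at $\gamma=0$ takes the canonical rotation form
\begin{equation}
\begin{pmatrix} 0 & \omega\\ -\omega & 0 \end{pmatrix},
\end{equation}
which matches the polar coordinates of Eq.~\eqref{eq:polar_def}.

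There is no serious obstacle here. The only subtle point is the scope of the claim: the genericity condition $\ell_1\neq 0$ is not part of this proposition. We would state explicitly that it is treated next, where the sign of $\ell_1$ should be proportional to $a>0$, consistent with the cubic coefficient $a/4$ in Eq.~\eqref{eq:amplitude_avg}, and this sign yields subcriticality.
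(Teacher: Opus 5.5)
Your proposal is correct and matches the paper's proof. Both compute the Jacobian at the origin, use $\mathrm{tr}\,J=2\gamma$ and $\det J=2E>0$ to obtain $\pm i\sqrt{2E}$ at $\gamma=0$, and check transversality; your $\mu'(0)=1$ is the paper's $\partial_\gamma\,\mathrm{tr}\,J|_{\gamma=0}=2$ halved. Your explicit note that the nondegeneracy condition $\ell_1\neq 0$ is deferred to the subcriticality step is a fair sharpening of the paper's looser ``hence a Hopf bifurcation occurs.''
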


\begin{proof}
The Jacobian of Eq.~\eqref{eq:app_planar} at $(0,0)$ equals
\begin{equation}
J(\gamma)=
\begin{pmatrix}
0 & 1\\
-2E & 2\gamma
\end{pmatrix},
\label{eq:app_J}
\end{equation}
so $\mathrm{tr}\,J(\gamma)=2\gamma$ and $\det J(\gamma)=2E>0$.
At $\gamma=0$ the eigenvalues are $\pm i\sqrt{2E}$, and the transversality condition
$\partial_\gamma(\mathrm{tr}\,J)|_{\gamma=0}=2\neq 0$ holds, hence a Hopf bifurcation occurs. 
\end{proof}

\begin{proposition}[Subcritical Hopf for $a>0$]
\label{prop:subcritical}
Assume $a>0$. Then the Hopf bifurcation at $\gamma=0$ is \emph{subcritical}:
there exists $\varepsilon>0$ such that for every $\gamma\in(-\varepsilon,0)$ the system admits
a small-amplitude \emph{unstable} limit cycle $\Gamma_{\mathrm{in}}(\gamma)$ surrounding the origin.
Moreover, its amplitude satisfies
\begin{equation}
A_{\mathrm{in}}^2(\gamma)= -\frac{4\gamma}{a}+\mathcal{O}(\gamma^2),
\qquad \gamma\to 0^-.
\label{eq:app_rin_scaling}
\end{equation}
\end{proposition}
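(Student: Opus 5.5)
We compute the first Lyapunov coefficient directly, rather than invoking a general normal-form formula blindly, and then apply the Hopf theorem together with a displacement-function argument.

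\emph{Step 1: exact polar equations.} Rescale $w=v/\omega$ with $\omega=\sqrt{2E}$ and set $\psi=r\cos\theta$, $w=-r\sin\theta$ as in Eq.~\eqref{eq:polar_def}. Substituting into Eq.~\eqref{eq:app_planar} gives the exact equations
\begin{equation*}
\partial_x r = 2r\,F(r^2\cos^2\theta)\sin^2\theta,\qquad \partial_x\theta=\omega+F(r^2\cos^2\theta)\sin 2\theta,
\end{equation*}
with $F(z)=\gamma+az-bz^2$. The second equation is to be checked by direct differentiation.

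\emph{Step 2: displacement function.} For small $r$ we have $\partial_x\theta\ge\omega/2>0$, so we may use $\theta$ as the independent variable. Then $dr/d\theta=R(r,\theta;\gamma)$ with $R=\mathcal{O}(r(|\gamma|+r^2))$ and $R$ analytic. Expand the solution as $r(\theta)=\rho+\rho\,u_1(\theta)+\dots$ in powers of $\gamma$ and $\rho=r(0)$. The displacement over one turn is
\begin{equation*}
d(\rho,\gamma)=r(2\pi)-\rho=\frac{2\pi}{\omega}\Bigl(\gamma\rho+\frac{a}{4}\rho^3\Bigr)+\mathcal{O}\bigl(\gamma^2\rho+|\gamma|\rho^3+\rho^5\bigr).
\end{equation*}
The leading coefficients come from the averages $\langle\sin^2\theta\rangle=1/2$ and $\langle\cos^2\theta\sin^2\theta\rangle=1/8$, together with the factor $2r/\omega$. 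This reproduces Eq.~\eqref{eq:amplitude_avg} at leading order.

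The main obstacle is verifying that the $\rho^3$ coefficient at $\gamma=0$ is exactly $\pi a/(2\omega)$. In principle the second-order feedback from $\partial_x\theta$ could also contribute. However, at $\gamma=0$ the term $R$ is already $\mathcal{O}(r^3)$, so the $\theta$-correction only enters at $\mathcal{O}(\rho^5)$, and the cubic coefficient is indeed just the average. Equivalently, the standard formula for $\ell_1$ in Li\'enard form gives $\ell_1\propto -f''(0)=4a>0$, which serves as a cross-check.

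\emph{Step 3: solve $d=0$.} Write $d=\rho\,\tilde d(\rho,\gamma)$ with $\tilde d$ analytic and even in $\rho$. This evenness follows from the symmetry $(\psi,v)\mapsto(-\psi,-v)$ of Eq.~\eqref{eq:app_planar}. Now $\tilde d$ is a smooth function of $\rho^2$ with $\partial_\gamma\tilde d=2\pi/\omega\neq0$, so by the implicit function theorem the nontrivial cycles lie on the curve
\begin{equation*}
\gamma=-\frac{a}{4}\rho^2+\mathcal{O}(\rho^4).
\end{equation*}
For $a>0$ this forces $\gamma<0$. Inverting gives $\rho^2=-4\gamma/a+\mathcal{O}(\gamma^2)$ for $\gamma\in(-\varepsilon,0)$ and no small cycle for $\gamma\ge 0$.

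The amplitude $A$ is defined at $v=0$, i.e. at $\theta\in\{0,\pi\}$. It differs from $\rho$ by the relative factor $1+\mathcal{O}(|\gamma|+\rho^2)$, so $A_{\rm in}^2=-4\gamma/a+\mathcal{O}(\gamma^2)$, which is Eq.~\eqref{eq:app_rin_scaling}.

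\emph{Step 4: instability.} On the branch, $\partial_\rho d=\frac{2\pi}{\omega}\bigl(\gamma+\frac{3a}{4}\rho^2\bigr)+\dots=\frac{2\pi}{\omega}\cdot\frac{a}{2}\rho^2+\dots>0$. The return map therefore has multiplier greater than $1$, so the cycle $\Gamma_{\rm in}(\gamma)$ is repelling. It surrounds the origin because $\theta$ winds once around it. This establishes subcriticality.
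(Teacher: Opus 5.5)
Your argument is correct, and it takes a more rigorous route than the paper's. The paper works entirely with the truncated averaged equation $\partial_x r=h(r)$. It imports the a priori estimate $A_{\rm in}=\mathcal{O}(\sqrt{|\gamma|})$ from general Hopf theory to justify the slowly-varying-amplitude condition. It then reads existence, the scaling and stability directly off the nonzero root of $h$ and the sign of $h'(A_{\rm in})$. Two steps are left implicit there: the passage from an equilibrium of the averaged equation to an actual periodic orbit, and the effect of the discarded remainder $\mathcal{O}\bigl(r(|\gamma|+ar^2+br^4)^2/\omega\bigr)$ on the $\mathcal{O}(\gamma^2)$ term.

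You instead use the exact displacement function of the Poincar\'e map. Averaging enters only to compute its two leading Taylor coefficients, $2\pi/\omega$ and $\pi a/(2\omega)$. Your remark that the phase correction cannot affect the cubic coefficient at $\gamma=0$ is exactly the point that needs checking. Oddness in $\rho$ plus the implicit function theorem then give existence, local uniqueness, the absence of small cycles for $\gamma\ge 0$, the scaling with a controlled $\mathcal{O}(\gamma^2)$ error, and hyperbolicity of the cycle, with no a priori amplitude bound needed.

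Your multiplier $1+\frac{2\pi}{\omega}\frac{a}{2}\rho^2$ agrees to leading order with $1+\frac{2\pi}{\omega}h'(A_{\rm in})$. Here the correct derivative is $h'(A)=\gamma+\frac{3a}{4}A^2-\frac{5b}{8}A^4=\frac{a}{2}A^2-\frac{b}{2}A^4\approx-2\gamma$ on the branch. Incidentally, the intermediate expression $2\gamma+\frac{a}{2}A_{\rm in}^2-\frac{3b}{4}A_{\rm in}^4$ displayed in the paper's proof equals $-\frac{b}{2}A_{\rm in}^4$ on the branch. Only its final value $-2\gamma+\mathcal{O}(\gamma^2)$ is right, and your computation confirms that value.

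The paper's route buys brevity and continuity with Sec.~\ref{sec:ave_eqn}, where the same $h$ also yields the outer branch and $\gamma_c^{\rm (th)}$. Yours is self-contained and rigorous, but intrinsically local near $(\rho,\gamma)=(0,0)$.
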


\begin{proof}
Fix $\gamma<0$ with $|\gamma|$ small.
In the Hopf neighborhood one has $A_{\mathrm{in}}=\mathcal{O}(\sqrt{|\gamma|})$~\cite{kuznetsov1998elements},
hence the slowly varying amplitude condition Eq.~\eqref{eq:fast_slow_cond} is satisfied for sufficiently small $|\gamma|$ (since $\omega=\sqrt{2E}>0$ is fixed).
Therefore the averaged amplitude equation Eq.~\eqref{eq:amplitude_avg} applies (Appendix~\ref{app:avg_eqn}):
\begin{equation}
\partial_x r = r\left(\gamma+\frac{a}{4}r^2-\frac{b}{8}r^4\right),
\label{eq:app_avg_ref}
\end{equation}
which has a nonzero equilibrium $r=A_{\mathrm{in}}(\gamma)>0$
satisfying
\begin{equation}
\gamma+\frac{a}{4}A_{\mathrm{in}}^2-\frac{b}{8}A_{\mathrm{in}}^4=0,
\label{eq:app_rin_root}
\end{equation}
which yields the scaling Eq.~\eqref{eq:app_rin_scaling}.
Moreover, stability follows from the one-dimensional criterion:
for the radial drift $h(r)=r(\gamma+\frac{a}{4}r^2-\frac{b}{8}r^4)$,
\begin{equation}
h'(A_{\mathrm{in}})=
2\gamma+\frac{a}{2}A_{\mathrm{in}}^2-\frac{3b}{4}A_{\mathrm{in}}^4
=
-2\gamma+\mathcal{O}(\gamma^2)>0,
\label{eq:app_rin_unstable}
\end{equation}
so the corresponding cycle is repelling (unstable), which is the defining signature of a
subcritical Hopf bifurcation. 
\end{proof}

\subsection{For $\gamma>0$: unique stable limit cycle from Li\'enard's theorem}
\label{app:proofs_gamma_pos}

\begin{proposition}[Existence and uniqueness of a stable limit cycle for $\gamma>0$]
\label{prop:stable_LC}
For each $\gamma>0$, the planar system Eq.~\eqref{eq:app_planar} admits a unique limit cycle, and
this limit cycle is attracting.
\end{proposition}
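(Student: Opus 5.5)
The plan is to verify the hypotheses of the classical Li\'enard theorem for the Li\'enard form Eq.~\eqref{eq:app_lienard} with $f,g$ given in Eq.~\eqref{eq:app_fg}. I will use the version found in~\cite{lienard1928, guckenheimer2013nonlinear}. It requires: (a) $f,g$ are $C^1$, $f$ is even, $g$ is odd, and $\psi g(\psi)>0$ for $\psi\neq0$; (b) the antiderivative $\mathcal F(\psi)\coloneqq\int_0^\psi f(u)\,du$ has exactly one positive zero $p$, with $\mathcal F<0$ on $(0,p)$ and $\mathcal F>0$, monotonically increasing, on $(p,\infty)$; (c) $\mathcal F(\psi)\to\infty$ as $\psi\to\infty$. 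Under these hypotheses the system has exactly one limit cycle, and every nontrivial trajectory spirals onto it, so it is attracting. The Li\'enard-plane coordinates $(\psi,\,v+\mathcal F(\psi))$ are related to $(\psi,v)$ by a global homeomorphism of the form $(\psi,v)\mapsto(\psi,v+\mathcal F(\psi))$. This map carries orbits to orbits and preserves the existence, uniqueness and stability of closed orbits, so the conclusion transfers directly to Eq.~\eqref{eq:app_planar}.

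Condition (a) is immediate. Here $f(\psi)=-2(\gamma+a\psi^2-b\psi^4)$ is a polynomial in $\psi^2$, hence even, and $g(\psi)=2E\psi$ with $E>0$ is odd with $\psi g>0$. For (b) and (c) I will compute
\begin{equation}
\mathcal F(\psi)=2\psi\Bigl(-\gamma-\tfrac{a}{3}\psi^2+\tfrac{b}{5}\psi^4\Bigr).
\end{equation}
Setting $u=\psi^2$, the bracket is the convex quadratic $q(u)=\tfrac{b}{5}u^2-\tfrac{a}{3}u-\gamma$. Since $\gamma>0$, we have $q(0)=-\gamma<0$, so $q$ has exactly one positive root $u_*$. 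Therefore $\mathcal F$ has exactly one positive zero $p=\sqrt{u_*}$, with $\mathcal F<0$ on $(0,p)$ and $\mathcal F>0$ beyond. Since $b>0$, the leading term $\tfrac{2b}{5}\psi^5$ gives $\mathcal F\to\infty$, which is (c).

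The step needing an argument rather than inspection is the monotonicity of $\mathcal F$ on $(p,\infty)$, i.e.\ $f>0$ there. I would handle it as follows. The function $f$ is $-2$ times $\gamma+au-bu^2$, and for $\gamma>0$ this concave quadratic in $u=\psi^2$ is positive at $u=0$. Hence it has exactly one positive root, so $f<0$ on $(0,\psi_f)$ and $f>0$ on $(\psi_f,\infty)$ for some $\psi_f>0$. Consequently $\mathcal F$ strictly decreases on $(0,\psi_f)$ from $\mathcal F(0)=0$, so $\mathcal F(\psi_f)<0$. Because $\mathcal F$ is increasing past $\psi_f$ and vanishes at $p$, we get $p>\psi_f$, and so $f>0$ on $(p,\infty)$, as required.

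With (a)--(c) in place, Li\'enard's theorem yields existence and uniqueness of the limit cycle and its attractivity. As a consistency check, this agrees with the Poincar\'e--Bendixson picture. Lemma~\ref{lemma:boundedness} gives forward boundedness, and the origin is a repelling focus for $\gamma>0$ (Proposition~\ref{prop:hopf} and Eq.~\eqref{eq:lin_origin}). Hence every nontrivial $\omega$-limit set must be a closed orbit, and by uniqueness it is this cycle.
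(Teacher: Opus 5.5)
Your proposal is correct and follows the paper's proof. Both apply the classical Li\'enard theorem with $f(\psi)=-2(\gamma+a\psi^2-b\psi^4)$, $g(\psi)=2E\psi$ and the same odd quintic antiderivative, and your argument via the convex quadratic $q(u)$ and the comparison $p>\psi_f$ supplies the justification that the paper only asserts: that there is a unique positive zero of $\mathcal F$ and that $f>0$ beyond it. One small nit in your consistency check: for $\gamma\ge\sqrt{2E}$ the origin is a repelling node rather than a focus, which does not affect the argument.
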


\begin{proof}
We apply the classical Li\'enard theorem~\cite{lienard1928, guckenheimer2013nonlinear} to Eq.~\eqref{eq:app_scalar} with $f,g$ given in Eq.~\eqref{eq:app_fg}.
Note that $g(\psi)=2E\psi$ is odd and satisfies $\psi g(\psi)=2E\psi^2>0$ for $\psi\neq 0$.
Define
\begin{equation}
F_{\mathrm{L}}(\psi)
\coloneqq
\int_0^\psi f(s)\,ds
=
-2\left(
\gamma\psi+\frac{a}{3}\psi^3-\frac{b}{5}\psi^5
\right),
\label{eq:app_FL}
\end{equation}
which is an odd quintic polynomial.
For $\gamma>0$, one has $F_{\mathrm{L}}(\psi)<0$ for sufficiently small $\psi>0$,
while $F_{\mathrm{L}}(\psi)\to +\infty$ as $\psi\to +\infty$ (the $+\frac{2b}{5}\psi^5$ term dominates).
Moreover,
\begin{equation}
F_{\mathrm{L}}'(\psi)=f(\psi)=-2(\gamma+a\psi^2-b\psi^4)
\end{equation}
is an even quartic polynomial that becomes strictly positive for all sufficiently large $|\psi|$
(since $2b\psi^4$ dominates).
Therefore there exists a unique $a_*>0$ such that
$F_{\mathrm{L}}(\psi)<0$ for $\psi\in(0,a_*)$ and
$F_{\mathrm{L}}(\psi)>0$ with $F_{\mathrm{L}}'(\psi)>0$ for all $\psi>a_*$.
These are precisely the hypotheses of the Li\'enard theorem, which implies existence and uniqueness
of a limit cycle; attractivity is part of the theorem (equivalently, follows from the Poincar\'e map
multiplier being less than $1$ in magnitude).
\end{proof}

\subsection{Global assembly and the SNLC at $\gamma=\gamma_c$}
\label{app:proofs_global}

We now combine the previous results to obtain the global bifurcation structure.

\begin{theorem}[Global bifurcation scenario]
\label{theorem:bif_diag}
Assume Eq.~\eqref{eq:app_assumptions}. Consider the family Eq.~\eqref{eq:app_planar} as $\gamma$ varies.
\begin{enumerate}
\item \textbf{(Skin-only for strongly negative $\gamma$).}
If $\gamma<-\frac{a^2}{4b}$, the origin is globally asymptotically stable and there are no limit
cycles (Proposition~\ref{prop:stable_origin}).

\item \textbf{(Subcritical Hopf at $\gamma=0$).}
At $\gamma=0$ the origin undergoes a Hopf bifurcation (Proposition~\ref{prop:hopf}), and for $a>0$ it is
subcritical: for $\gamma<0$ sufficiently close to $0$ there exists an \emph{unstable} small cycle
$\Gamma_{\mathrm{in}}(\gamma)$ (Proposition~\ref{prop:subcritical}).

\item \textbf{(Unique outer cycle for $\gamma>0$).}
For every $\gamma>0$ there exists a unique \emph{attracting} limit cycle
$\Gamma_{\mathrm{out}}(\gamma)$ (Proposition~\ref{prop:stable_LC}).

\item \textbf{(Existence of $\gamma_c<0$ and SNLC).}
Define
\begin{equation}
\begin{aligned}
\Gamma \coloneqq & \{\gamma\in\mathbb{R}:
\\
&\quad \text{Eq.~}\eqref{eq:app_planar}\ \text{has at least one limit cycle}\},
\\
\gamma_c \coloneqq& \inf \Gamma.
\end{aligned}
\label{eq:app_gamma_c_def}
\end{equation}
Then $\gamma_c\in\bigl(-\frac{a^2}{4b},\,0\bigr)$.
Moreover, generically (under standard nondegeneracy conditions on the Poincar\'e return map),
the family of limit cycles terminates at $\gamma=\gamma_c$ through a
\emph{SNLC bifurcation}:
at $\gamma=\gamma_c$ there exists a nonhyperbolic (semistable) cycle with Floquet multiplier $+1$,
and for $\gamma>\gamma_c$ close to $\gamma_c$ there are two cycles of opposite stability
(one stable and one unstable), while for $\gamma<\gamma_c$ no cycle exists.

\item \textbf{(At most two cycles).} For any $\gamma$, the number of limit cycles is at most two.

\item \textbf{(Two-cycle region and persistence to $\gamma>0$).} In the parameter window $\gamma_c<\gamma<0$, the origin is locally asymptotically stable (since $\operatorname{tr}J=2\gamma<0$), and one has the unstable inner cycle $\Gamma_{\mathrm{in}}(\gamma)$ plus the stable outer cycle $\Gamma_{\mathrm{out}}(\gamma)$. For $\gamma>0$, the origin is unstable and the unique stable limit cycle is precisely the continuation of $\Gamma_{\mathrm{out}}(\gamma)$.
\end{enumerate}
\end{theorem}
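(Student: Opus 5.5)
The plan is to assemble the theorem item by item, taking as much as possible from the propositions already proved. Items~1--3 are restatements of Propositions~\ref{prop:stable_origin}, \ref{prop:hopf}--\ref{prop:subcritical} and \ref{prop:stable_LC}, so they need only be cited. Item~4 then splits into a lower bound, an upper bound, and the SNLC claim. For the lower bound, Proposition~\ref{prop:stable_origin} shows that $\Gamma\cap(-\infty,-a^2/4b)=\emptyset$, so $\gamma_c\ge -a^2/(4b)$. For the upper bound, Proposition~\ref{prop:subcritical} gives a cycle for all $\gamma\in(-\varepsilon,0)$, so $\gamma_c\le-\varepsilon<0$.

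To obtain strict inequality at the left end, I will show that $\Gamma$ is open near its infimum in the following sense: if a cycle exists at $\gamma_0$, then one exists slightly below $\gamma_0$ as well. The tool is the return map $P_\gamma$ on $\Sigma$. At $\gamma_0$, pick a cycle $s_0$. Since $\partial_\gamma \partial_x V=2v^2>0$, orbits at smaller $\gamma$ lose more energy over a turn, so $P_\gamma(s)$ is strictly increasing in $\gamma$ (a rotated-vector-field argument). If the cycle is stable from outside (for instance, the outermost one), then $P_{\gamma_0}(s)<s$ for $s$ slightly larger than $s_0$, and this inequality persists for $\gamma$ slightly below $\gamma_0$. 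On the other hand, $P_\gamma(s)>s$ for $s$ near $0$ as long as no other cycle intervenes; failing that, the forward-bounded annulus from Lemma~\ref{lemma:boundedness} combined with Poincar\'e--Bendixson yields a cycle.

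The key observation is that the argument runs in the opposite direction at the endpoint. At $\gamma=-a^2/(4b)$ itself, $\partial_x V\le0$ holds with equality only on a set containing no full orbit, so there is no cycle there either. The rotated-field monotonicity then shows that the set of $\gamma$ admitting a cycle is upward-closed in the sense of the outer cycles. Compactness of the family of cycles follows from uniform boundedness: Lemma~\ref{lemma:boundedness} can be made uniform in $\gamma$ on compact sets. From this, $\gamma_c\in\Gamma$ in the limit, with a cycle of multiplier $+1$. That multiplier must equal $1$, since otherwise the implicit function theorem would continue the cycle below $\gamma_c$. Moreover the limiting cycle is not the origin, because the origin is hyperbolic for $\gamma_c<0$. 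Hence $\gamma_c>-a^2/(4b)$ strictly. The SNLC statement then follows from the generic fold of the scalar equation $F_{\rm LC}(s,\gamma)=0$: we have $\partial_s F_{\rm LC}=0$, $\partial_\gamma F_{\rm LC}\neq0$ (from the rotation monotonicity), and we assume $\partial_s^2F_{\rm LC}\ne0$ as the nondegeneracy hypothesis.

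Item~5, \emph{at most two cycles}, is the main obstacle. Lins--de~Melo--Pugh type bounds do not directly give two for a degree-four $f$. My first attempt will use the Li\'enard transformation $y=v+F_L(\psi)$, together with the fact that $F_L$ is odd with exactly two positive critical-value regions. From there I would apply a Zhang Zhifen style uniqueness/two-cycle theorem, or the Coppel--Gavrilov criterion for $f$ even with $f(\psi)=-2(\gamma+a\psi^2-b\psi^4)$, viewed as a function of $u=\psi^2$ with exactly two sign changes. If that fails, I would fall back on the averaged equation: the averaged drift $h$ of Eq.~\eqref{eq:amplitude_avg} has at most two positive zeros, which gives the bound in the small-$F/\omega$ regime, with the general case stated under that hypothesis. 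Finally, item~6 follows by combining the results. Item~5 together with the fold structure of item~4 forces exactly one inner unstable cycle and one outer stable cycle on $(\gamma_c,0)$. The inner branch is the Hopf cycle of Proposition~\ref{prop:subcritical}, continued by the implicit function theorem, and it cannot fold before reaching $\gamma_c$ without violating the two-cycle bound. At $\gamma=0$ the inner cycle collapses into the origin. The outer cycle stays hyperbolic through $\gamma=0$, and by uniqueness in Proposition~\ref{prop:stable_LC} its continuation is $\Gamma_{\rm out}(\gamma)$ for $\gamma>0$.
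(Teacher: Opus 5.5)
The genuine gap is item~5, and item~6 inherits it. You never prove that there are at most two limit cycles for every $\gamma$ at fixed but arbitrary $a,b,E>0$. The Zhang Zhifen route cannot work, because criteria of that type are uniqueness theorems. They need $f(\psi)/g(\psi)$ monotone on $(0,\infty)$, but here its derivative is proportional to $3b\psi^4-a\psi^2+\gamma$, which changes sign for $\gamma<0$. And on $(\gamma_c,0)$ there really are two cycles. The other criterion you name is not checked. The averaging fallback proves a strictly weaker statement. First-order averaging controls cycles only in a fixed compact annulus and only for a sufficiently small perturbation parameter, whereas the theorem is about fixed parameters. Moreover, since $br^4$ is unbounded, Eq.~\eqref{eq:fast_slow_cond} never holds uniformly in amplitude, so large cycles would need a separate exclusion argument. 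The missing ingredient is the classical bound the paper simply cites. Rescale $x\mapsto\omega x$ with $\omega=\sqrt{2E}$ and pass to the Li\'enard plane. Then Eq.~\eqref{eq:app_scalar} becomes $\dot\psi=y-\tilde F(\psi)$, $\dot y=-\psi$, with $\tilde F=F_{\mathrm{L}}/\omega$ an odd quintic, and for this class the maximal number of limit cycles is two (Rychkov, 1975). With that bound, your item~6 reasoning closes, as follows.
\begin{itemize}
\item For $\gamma<0$, $P_\gamma(s)-s$ is negative at both ends of its domain, because the origin is a sink and forward orbits are bounded. So it changes sign an even number of times.
\item A cycle exists for every $\gamma>\gamma_c$ by upward-closedness.
\item A purely tangential configuration $P_{\gamma_1}(s)\le s$ at some $\gamma_1>\gamma_c$ is impossible: strict monotonicity in $\gamma$ would then forbid cycles for all $\gamma<\gamma_1$, contradicting $\gamma_c\in\Gamma$.
\end{itemize}
Hence there is exactly one inner unstable and one outer stable cycle on $(\gamma_c,0)$.

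In item~4, delete the claim that a cycle at $\gamma_0$ always produces one slightly below $\gamma_0$. It is false precisely at the fold $\gamma_0=\gamma_c$, and it contradicts your own conclusion $\gamma_c\in\Gamma$. Its justification also uses $P_\gamma(s)>s$ near $s=0$, which is wrong for $\gamma<0$, where the origin is a sink. The argument you then actually run is sound. It combines upward-closedness of $\Gamma$ from the rotated-vector-field structure (the $\gamma$-derivative of the field is $(0,2v)$, whose wedge with the field is $2v^2\ge 0$) and compactness of cycles as $\gamma\downarrow\gamma_c$. It also uses that the cycles cannot collapse onto the hyperbolic sink, and that no cycle exists at $\gamma=-a^2/(4b)$ itself. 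You should add one point: the limiting periodic orbit is isolated, since a period annulus would make the analytic map $P_{\gamma_c}$ the identity, contradicting $P_{\gamma_c}(s)<s$ at the ends of its domain. This is more careful than the paper, which reads the strict inequality $\gamma_c>-a^2/(4b)$ off item~1, although item~1 only yields $\gamma_c\ge -a^2/(4b)$. Likewise, your derivation of $\partial_\gamma F_{\rm LC}\neq 0$ from the rotation improves on the paper's bare assumption $\alpha\neq 0$. It does need the standard integral formula for $\partial_\gamma P_\gamma$ along the cycle, whose integrand is nonnegative and not identically zero; strict monotonicity alone would not suffice.
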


\begin{proof}

Items (1)--(3) follow directly from Propositions~\ref{prop:stable_origin}--\ref{prop:stable_LC}.

To see item (4), note first that (1) implies $\Gamma$ is bounded below by $-a^2/(4b)$,
while (3) implies $\Gamma$ is nonempty and contains all $\gamma>0$.
Hence $\gamma_c$ defined in Eq.~\eqref{eq:app_gamma_c_def} satisfies
$\gamma_c\in(-\frac{a^2}{4b},0]$; and (2) guarantees the existence of a limit cycle for some
$\gamma<0$ sufficiently close to $0$, so in fact $\gamma_c<0$.
For the fold mechanism, fix a transversal section $\Sigma$ and denote by $P_\gamma:\Sigma\to\Sigma$
the Poincar\'e return map. A limit cycle corresponds to a fixed point $\xi=P_\gamma(\xi)$.
A \emph{hyperbolic} cycle corresponds to $P_\gamma'(\xi)\neq 1$ and hence persists smoothly
under parameter variation by the implicit function theorem~\cite{kuznetsov1998elements}.
Therefore, if a branch of limit cycles ceases to exist at a finite parameter value,
the terminal cycle must be nonhyperbolic with multiplier $+1$, i.e. $P_{\gamma_c}'(\xi_c)=1$.
In a one-parameter family, the generic codimension-one way for limit cycles to be
created/annihilated at $P_{\gamma_c}'(\xi_c)=1$ is a SNLC bifurcation:
\begin{equation}
\begin{gathered}
P_\gamma(\xi)-\xi
=
\alpha(\gamma-\gamma_c)+\beta(\xi-\xi_c)^2+\text{h.o.t.},
\\
\alpha\beta\neq 0,
\label{eq:app_SN_map}    
\end{gathered}
\end{equation}
which yields two limit cycles for $\gamma>\gamma_c$ (one stable, one unstable) and none for
$\gamma<\gamma_c$. Translating back to the flow gives the SNLC structure stated by (4).

Item (5) follows directly from a classical result~\cite{max_num_LC}, which implies that the number of limit cycles is at most two for odd quintic Li\'enard systems of the form Eq.~\eqref{eq:app_scalar}.

Combining items (2)-(5) gives item (6).

\end{proof}

Theorem~\ref{theorem:bif_diag} yields the exact bifurcation diagram:
\[
\begin{aligned}
&\gamma<\gamma_c:\ \text{only stable origin};  \\
&\gamma=\gamma_c:\  \text{SNLC bifurcation}; \\
\gamma_c<&\gamma<0:\ \text{stable origin + unstable inner cycle} \\
&\qquad\qquad \text{+ stable outer cycle}; \\
&\gamma = 0: \  \text{subcritical Hopf bifurcation}; \\
&\gamma>0:\ \text{unstable origin + unique stable outer cycle}.
\end{aligned}
\]
This is precisely the bifurcation diagram in Fig.~\ref{fig:bif_diag}.

\section{Detailed derivation of the averaged amplitude equation}
\label{app:avg_eqn}

In this appendix, we provide a detailed derivation of the averaged amplitude equation
used in Sec.~\ref{sec:ave_eqn}.
We start from the planar phase-space flow dynamics Eq.~\eqref{eq:phase_flow},
\begin{equation}
\begin{aligned}
\partial_x \psi &= v,\\
\partial_x v &= 2F(\psi^2)\,v - 2E\,\psi,
\end{aligned}
\label{eq:app_flow_psi_v}
\end{equation}
where $F(\psi^2)=\gamma + a\psi^2 - b\psi^4 .$ and we focus on $E>0$ with real $\psi(x)$.

In the Hermitian limit ($F(\psi^2)=0$ in Eq.~\eqref{eq:app_flow_psi_v}), our model reduces to a harmonic oscillator with frequency
\begin{equation}\label{eq:app_natural_freq}
\omega=\sqrt{2E}.
\end{equation}
It is therefore natural to rescale $v=\partial_x \psi$ as
\begin{equation}
w \coloneqq \frac{v}{\omega}.
\label{eq:app_omega_w}
\end{equation}
Then Eq.~\eqref{eq:app_flow_psi_v} becomes
\begin{equation}
\begin{aligned}
\partial_x \psi &= \omega\, w,\\
\partial_x w &= -\omega\,\psi + 2F(\psi^2)\,w.
\end{aligned}
\label{eq:app_flow_scaled}
\end{equation}
We then introduce polar coordinates in the $(\psi,w)$ plane,
\begin{equation}
\psi = r\cos\theta,\qquad
w = -r\sin\theta,
\qquad r\ge 0,
\label{eq:app_polar}
\end{equation}
so that $r$ measures the oscillation amplitude and $\theta$ is the oscillation phase.

\subsection*{Exact radial dynamics}

A convenient way to obtain the radial equation is to use the identity
\begin{equation}
r^2 = \psi^2 + w^2.
\label{eq:app_r2_def}
\end{equation}
Differentiating Eq.~\eqref{eq:app_r2_def} and using Eq.~\eqref{eq:app_flow_scaled} gives
\begin{equation}
\begin{aligned}
\partial_x(r^2)
&= 2\psi\,\partial_x\psi + 2w\,\partial_x w\\
&= 2\psi(\omega w)
 + 2w\!\left(-\omega\psi + 2F(\psi^2)w\right)\\
&= 4F(\psi^2)\,w^2.
\end{aligned}
\label{eq:app_r2_exact_1}
\end{equation}
Substituting Eq.~\eqref{eq:app_polar} (so that $\psi^2=r^2\cos^2\theta$ and $w^2=r^2\sin^2\theta$),
we obtain the exact identity
\begin{equation}
\partial_x(r^2)
=
4\Bigl(\gamma + a r^2\cos^2\theta - b r^4\cos^4\theta\Bigr)\,
r^2\sin^2\theta.
\label{eq:app_r2_exact_2}
\end{equation}
For $r>0$, dividing by $2r$ yields the exact radial equation
\begin{equation}
\partial_x r
=
2r\Bigl(\gamma + a r^2\cos^2\theta - b r^4\cos^4\theta\Bigr)\sin^2\theta.
\label{eq:app_r_exact}
\end{equation}
This is Eq.~\eqref{eq:r_exact} in the main text. Eq.~\eqref{eq:app_r_exact} shows that the amplitude change is modulated by trigonometric factors, i.e., it oscillates within each cycle.

\subsection*{Exact phase dynamics and \\ the slowly varying amplitude condition}

To obtain the phase equation, we use the standard polar-angle identity
\begin{equation}
\partial_x\theta
=
-\frac{\psi\,\partial_x w - w\,\partial_x\psi}{r^2}.
\label{eq:app_theta_identity}
\end{equation}
Using Eq.~\eqref{eq:app_flow_scaled}, we compute
\begin{equation}
\begin{aligned}
\psi\,\partial_x w - w\,\partial_x\psi
&=
\psi\!\left(-\omega\psi + 2F(\psi^2)w\right) - w(\omega w)\\
&=
-\omega(\psi^2+w^2) + 2F(\psi^2)\psi w\\
&=
-\omega r^2 + 2F(\psi^2)\psi w.
\end{aligned}
\label{eq:app_theta_numer}
\end{equation}
Substituting Eq.~\eqref{eq:app_theta_numer} into Eq.~\eqref{eq:app_theta_identity}
and using $\psi w/r^2 = -\cos\theta\,\sin\theta$ from Eq.~\eqref{eq:app_polar}, we get the exact phase equation
\begin{equation}
\partial_x\theta
=
\omega
+
2F\!\left(r^2\cos^2\theta\right)\sin\theta\cos\theta.
\label{eq:app_theta_exact}
\end{equation}
Eq.~\eqref{eq:app_theta_exact} shows that the leading term of the exact angular velocity is the natural Hermitian frequency $\omega$ from Eq.~\eqref{eq:app_natural_freq}.
Since
\begin{equation}
F\!\left(r^2\cos^2\theta\right)
=
\gamma + a r^2\cos^2\theta - b r^4\cos^4\theta,
\label{eq:app_F_polar}
\end{equation}
the correction term in Eq.~\eqref{eq:app_theta_exact} is bounded by
$\mathcal{O}(|\gamma|+a r^2+br^4)$.
Therefore, in the regime
\begin{equation}
|\gamma| + ar^2 + br^4 \ll \omega=\sqrt{2E},
\label{eq:app_fast_slow}
\end{equation}
the phase advances almost uniformly, $\partial_x\theta \approx \omega$,
while the radius changes slowly over one oscillation period, $|r'|/r\ll |\theta'|$.
This is the ``slowly varying amplitude'' condition in Eq.~\eqref{eq:fast_slow_cond} used in the main text~\cite{2007Averaging, Kevorkian1996Multiple}.

\subsection*{First-order averaging and the averaged amplitude equation}

Under Eq.~\eqref{eq:app_fast_slow}, one may apply the standard first-order averaging theorem
to the radial equation Eq.~\eqref{eq:app_r_exact} by averaging over $\theta\in[0,2\pi)$.
Using the elementary averages
\begin{equation}
\begin{gathered}
\Bigl\langle \sin^2\theta \Bigr\rangle = \frac{1}{2},\\
\Bigl\langle \cos^2\theta\,\sin^2\theta \Bigr\rangle = \frac{1}{8},\\
\Bigl\langle \cos^4\theta\,\sin^2\theta \Bigr\rangle = \frac{1}{16},
\end{gathered}
\label{eq:app_trig_avgs}
\end{equation}
the averaged drift of $r$ becomes
\begin{equation}
\partial_x r
\approx
r\left(\gamma+\frac{a}{4}r^2-\frac{b}{8}r^4\right).
\label{eq:app_avg_drift}
\end{equation}
More precisely, first-order averaging yields
\begin{equation}
\begin{aligned}
\partial_x r
=
&r\left(\gamma+\frac{a}{4}r^2-\frac{b}{8}r^4\right)
\\
&+
\mathcal{O}\!\left(\frac{r\left(|\gamma|+ar^2+br^4\right)^2}{\omega}\right),
\end{aligned}
\label{eq:app_avg_with_remainder}
\end{equation}
where the remainder is higher order in the small parameter
$\left(|\gamma|+ar^2+br^4\right)/\omega$~\cite{2007Averaging, Kevorkian1996Multiple}.
Dropping the remainder term gives Eq.~\eqref{eq:amplitude_avg} in the main text:
\begin{equation}
\partial_x r = h(r),
\qquad
h(r)\equiv r\left(\gamma+\frac{a}{4}r^2-\frac{b}{8}r^4\right).
\label{eq:app_avg_eqn_final}
\end{equation}

Finally, note that at a turning point of $\psi(x)$ on a periodic orbit one has $v=\partial_x\psi=0$,
equivalently $w=0$ and $\sin\theta=0$ in Eq.~\eqref{eq:app_polar}, so $|\psi|=r$.
Thus the equilibria $r=A$ correspond directly to the limit-cycle amplitudes used in
Sec.~\ref{sec:bif_diag}.

\bibliography{apssamp}
\end{document}